\newcommand{\indic}[1]{1\hspace{-2.1mm}{1}_{\{#1\}}} 
\newcommand{\PP}{{\mathord{I\kern -.33em P}}}
\newcommand{\EE}{{\mathord{I\kern -.33em E}}}
\newcommand{\RR}{{\mathord{I\kern -.33em R}}}
\def\R{\mathbb{R}} 
\def\P{\mathbb{Q}} 
\def\E{\mathbb{E}} 
\def\N{{\mathbb N}} 
\def\Q{{\mathbb Q}} 
\def\1{{\mathbf 1}} 
\def\F{{\mathcal F}} 
\def\D{{\mathcal D}}
\def\tW{{\tilde{W}}}
\def\L{{\mathcal L}}
\def\M{{\mathcal M}}
\def\tlambda{{\tilde{\lambda}}}
\def\th{{\tilde{h}}}
\def\hP{{\widehat{P}}}
\def\Spread{{\mathcal{S}}}
\newtheorem{theorem}{Theorem}[section]
\newtheorem{lemma}[theorem]{Lemma}
\newtheorem{example}[theorem]{Example}
\newtheorem{proposition}[theorem]{Proposition}
\theoremstyle{remark}
\newtheorem{remark}[theorem]{Remark}
\numberwithin{equation}{section}
\numberwithin{theorem}{section}
\begin{document}

\title{Pricing Derivatives with
Counterparty Risk and Collateralization:\\ A Fixed Point Approach}
\author{Jinbeom Kim\thanks{Industrial Engineering and Operations Research Department, Columbia University, New York NY 10027. E-mail:  \mbox{{jk3071@columbia.edu}}. } \qquad Tim Leung\thanks{Corresponding author. Industrial Engineering and Operations Research Department,  Columbia University, New York NY 10027. E-mail:  \mbox{{leung@ieor.columbia.edu}}. }}
\date{January 25, 2015} \maketitle
\begin{abstract}
\begin{small}
This paper  studies a  valuation framework  for   financial contracts subject to reference and counterparty default risks with collateralization requirement. We propose a fixed point  approach  to analyze  the mark-to-market contract value with  counterparty risk provision,   and show that it   is a unique bounded and continuous fixed point via  contraction mapping. This leads us to develop an accurate   iterative  numerical scheme for valuation.   Specifically, we solve   a sequence of linear  inhomogeneous PDEs, whose solutions  converge   to the fixed point price function.   We apply our methodology to compute the bid and ask prices for  both defaultable equity and fixed-income derivatives, and illustrate the non-trivial effects  of counterparty risk, collateralization ratio and liquidation convention  on the bid-ask spreads.\end{small}
\end{abstract}
\vspace{10pt}
\begin{small}
 {\textbf{Keywords:}\, bilateral counterparty risk,  collateralization, credit valuation adjustment,  fixed point method, contraction mapping}\\
\noindent {\textbf{JEL Classification:}\, G12, G13, G23, C63 }\\
\end{small}
\begin{small}
\tableofcontents
\end{small}

\onehalfspacing
\section{Introduction} \label{sec:intro}
Counterparty risk has played an important role during the 2008 financial crisis. According to the Bank for International Settlements (BIS)\footnote{See BIS press release at http://www.bis.org/press/p110601.pdf},   two-thirds of counterparty risk losses during the crisis were from counterparty risk adjustments in MtM valuation whereas the rest were due to actual defaults.  In order to account for the counterparty risk, recent regulatory changes, such as Basel III, incorporate the counterparty risk adjustments in the calculation of capital requirement. On the other hand, the use of collateral in the derivative market has increased dramatically. According to the survey conducted by the International {S}waps and {D}erivatives {A}ssociation ({I}{S}{D}{A}) in 2013\footnote{Survey available at http://www2.isda.org/functional-areas/research/surveys/margin-surveys/}, the percentage of all trades subject to collateral agreements in the over-the-counter (OTC) market increases from $30\%$ in 2003 to $73.7\%$ in 2013.  OTC market participants continue to adapt collateralization and  counterparty risk adjustments in their MtM valuation methodologies for various contracts, including forwards, total return swaps, interest rate swaps  and credit default swaps.

When an OTC market participant trades a financial claim with a counterparty, the participant is exposed not only to the price change and default risk of the underlying asset but also to the default risk of the counterparty. To reflect the counterparty default risk in MtM valuation,  three adjustments are calculated in addition to the counterparty-risk free value of the claim. While credit valuation adjustment (CVA) accounts for the possibility of the counterparty's default, debt valuation adjustment (DVA) is calculated to adjust for the participant's own default risk. In addition, collateral interest payments and the cost of borrowing  generate funding valuation adjustment (FVA). In the industry, the valuation  adjustment incorporating CVA, DVA, FVA and  collateralization,  is called total valuation adjustment (XVA)\footnote{The terminology can be found in \cite{XVA_Riskmag} and \cite{fouque2013handbook}, among others. }.

In this paper, we study a valuation framework  for financial contracts accounting for XVA. We consider two current market conventions for price computation. The main difference in the two conventions rises in the assumption of the liquidation value -- either counterparty risk-free value or MtM value with counterparty risk provision -- upon default. \cite{brigo2012counterparty} and  \cite{brigo2011credit} show that the values under the two conventions have significant differences and large impacts on net debtors and creditors.

With counterparty risk provision, the MtM value is defined implicitly via a risk-neutral expectation. This gives rise to major challenges in analyzing and computing the contract value. We propose a novel  fixed point approach to analyze the MtM value. Our methodology  involves solving a sequence of inhomogeneous linear PDEs,  whose classical solutions are shown via contraction mapping arguments to  converge to the unique fixed point price function.  This approach also motivates us to develop an iterative numerical scheme to compute the  values of  a variety of financial claims under different market conventions. 

In related studies,  \cite{fujii2013derivative} incorporate BCVA and under/over collateralization, and  calculate  the MtM value  by simulation.  \cite{henry2012counterparty} approximates the MtM value by numerically solving  a related nonlinear PDE  through simulation of a marked branching diffusion, and provides  conditions to   avoid a ``blow-up" of the simulated solution. \cite{burgard2011partial} also consider a similar nonlinear PDE in their study of hedging strategies inclusive of funding costs, and taking into account closeout payments exchanged at the time when either party of the contract defaults.  In contrast, our fixed point methodology works directly with  the   price  definition in terms of a recursive expectation, rather than heuristically  stating and solving  a nonlinear PDE.  Our  contraction mapping result allows us to solve a series of \textit{linear} PDE problems with  bounded classical solutions, and obtain  a unique bounded continuous   MtM value.

Our model also provides insight on the bid-ask prices of various financial contracts. The XVA is asymmetric for the buyer and the seller. As such, the incorporation of  adjustment to   unilateral   or bilateral counterparty risk leads to  a non-zero bid-ask spread. In other words, counterparty risk reveals itself as a market friction, resulting in a transaction cost for OTC trades.  In addition, we  examine the  impact of various parameters such as  default rate, recovery rate,  collateralization ratio and effective collateral interest rate. We find that a higher  counterparty default rate and funding cost  reduce the  MtM value,  whereas the market participant's own default rate and collateralization  ratio have  positive price effects. For claims with a positive payoff, such as calls and puts,  we establish a number of price dominance relationships.  In particular, when collateral rates are low, the bid-ask prices are dominated by the counterparty risk-free value. Moreover, the bid-ask prices decrease when we use the MtM value rather than counterparty risk-free value for the liquidation value upon default.

The recent regulatory   changes and  post-crisis perception of counterparty risk have motivated  research on the analysis of XVA.  \cite{brigo2012arbitrage} consider an arbitrage-free valuation framework with  bilateral counterparty risk and  collateral with possible rehypothecation. \cite{fouque2013handbook} studies the arbitrage-free valuation  of counterparty risk,  and analyzes   the impact  of default correlation, collateral migration frequency and collateral re-hypothecation on the  collateralized CVA. \cite{thompson2010counterparty} analyzes the effect of counterparty risk on insurance contracts  and examines the moral hazard of the insurers. \cite{brigo2009counterparty} focus on the valuation of  CDS  with counterparty risk that is correlated with the reference default.   \cite{hull2012cva}  investigate  the  wrong way risk -- the additional risk generated by the correlation between  the portfolio return and  counterparty default risk.

Let us give an outline of the rest of the paper. In Sect. \ref{sec:equity_counterparty}, we formulate the MtM valuation of a generic financial claim  with default risk and counterparty default risks under collateralization. In Sect. \ref{sec:fixed_point}, we provide a fixed point theorem and a recursive algorithm for valuation. In Sect. \ref{Equity_Claim}, we compute  the MtM values of various defaultable equity claims and derive their  bid-ask prices.  In Sect. \ref{sec:BCVA_Credit}, we apply our model to  price a number of  defaultable fixed-income claims. Sect. \ref{Conclusions} concludes the paper, and the Appendix  contains a number of longer proofs.
  
\section{Model Formulation}  \label{sec:equity_counterparty}
In the background, we fix a  probability space $(\Omega, \F, \Q)$, where $\Q$ is the risk-neutral pricing measure.  In our model, there are three defaultable parties: a  reference entity, a market participant, and  a counterparty dealer. We denote them respectively  as parties 0, 1, and 2. The default  time $\tau_i$ of party $i \in \{0,1,2\}$ is modeled by the first jump time of an  exogenous doubly stochastic Poisson process. Precisely, we define
\begin{align}
 \tau_i = \inf\big\{t \ge 0 : \int_0^t \lambda^{(i)}_u \, du > E_i\big\}\,, \label{def:default_time_dist}
\end{align}
where $\{E_i\}_{i=0,1,2}$ are unit exponential random variables that are independent of the intensity  processes  $(\lambda^{(i)}_t)_{t \ge 0}$,\,$i \in \{0,1,2\}$. Throughout, each intensity process  is assumed to be of   Markovian form $\lambda^{(i)}_t \equiv \lambda^{(i)}(t,S_t,X_t)$ for some bounded positive  function $\lambda^{(i)}(t,s,x)$, and is driven by the \emph{pre-default} stock price $S$ and the stochastic factor $X$ satisfying the SDEs
\begin{align}
 dS_t  &= (r(t, X_t) +  \lambda^{(0)}(t,S_t,X_t)) \, S_t \, dt + \sigma(t, S_t) \, S_t \,  dW_t\,,  \label{model:stock}\\
  dX_t &= b(t,X_t) \, dt + \eta(t,X_t) \, d\tW_t \,. \label{model:intensity}
\end{align}
Here, $(W_t)_{t \ge 0}$ and $(\tW_t)_{t \ge 0}$ are standard Brownian motions under $\P$ with an instantaneous  correlation parameter $\rho \in(-1, 1)$. The risk-free interest rate is denoted by $r_t 
\equiv r(t,X_t)$ for some bounded positive  function. At the  default time $\tau_0$, the stock price will  jump to value zero and remain worthless afterwards. This ``jump-to-default model" for $S$ is a variation of those by \cite{merton76},  \cite{carr2006jump}, and \cite{RafaLin11}.

\subsection{Mark-to-Market Value with   Counterparty Risk Provision}\label{subsec:price_formulation} A  defaultable claim is described by the triplet $(g,h,l)$, where $g(S_T, X_T)$ is the  payoff at maturity $T$, $(h(S_t, X_t))_{0 \le t \le T}$ is the dividend process, and $l(\tau_0,X_{\tau_0})$ is the payoff at the default time $\tau_0$ of the reference entity. We assume continuous collateralization which is a reasonable proxy for the current market where daily or intraday margin calls are common \citep[see][]{fujii2013derivative}. For party $i \in \{1,2\}$, we denote by $\delta_{i}$ the collateral coverage ratio of the claim's MtM value. We use the range $0 \le \delta_{i} \le 120\%$ since dealers usually require over-collateralization up to $120\%$ for credit or equity linked notes \citep[see][Table 1]{ramaswamy2011market}.

We first consider pricing of a defaultable claim  without bilateral counterparty risk. We call this value \textit{counterparty-risk free (CRF) value}.  Precisely, the ex-dividend \textit{pre-default} time $t$ CRF value of the defaultable claim with $(g,h,l)$ is given by
\begin{align}
\Pi(t,s,x) &:= \E_{t,s,x} \bigg[ \, e^{- \int_t^T (r_u + \lambda^{(0)}_u) \, du} \, g(S_T,X_T) + \int_t^T e^{- \int_t^u (r_v + \lambda^{(0)}_v) \, dv} \, \left(h(S_u,X_u) + \lambda^{(0)}_u \, l(u,X_u)\right) \, du   \bigg]. \label{Expectation_Pi}
\end{align}
The shorthand notation $\E_{t,s,x}[\,\cdot\,] := \E[\, \cdot \, \vert S_t = s ,\, X_t = x \,]$ denotes the conditional expectation under $\P$ given $S_t = s$,\, $X_t = x$.

Incorporating counterparty risk, we let $\tau = \min\{\tau_0,\tau_1, \tau_2 \}$, which is the first default  time  among the three parties with the intensity function
   $\lambda(t,s,x)   =   \sum_{k= 0}^2 \lambda^{(k)}(t,s,x)$. The corresponding three default events $\{ \tau = \tau_0\}$, $\{\tau = \tau_1\}$ and $\{\tau = \tau_2 \}$ are mutually exclusive. When the reference entity defaults ahead of parties $1$ and $2$, i.e. $\tau = \tau_0$, the contract is terminated and party 1 receives $l(\tau_0, X_{\tau_0})$ from party 2 at time $\tau_0$. When either the market participant  or the counterparty  defaults   first, i.e. $\tau < \tau_0$, the amount that the remaining party gets depends on unwinding mechanism at the default time. We adopt the market convention where the MtM value with counterparty risk provision, denoted by $P$, is used to compute the value upon the participant's defaults \citep[see][]{fujii2013derivative,henry2012counterparty}.

Throughout, we use the notations $x^+ = x \, \indic{x \ge 0}$ and $x^- = - x \, \indic{x < 0}$. Suppose that party $2$ defaults first, i.e. $\tau = \tau_2$.  If the MtM value at default is positive $(P_{\tau_2} \ge 0)$, then party $1$ incurs a loss only if the contract is under-collateralized by party $2$ $(\delta_{2} < 1)$ since the amount $\delta_{2} \, P^+_{\tau_2}$ is secured as a collateral. As a result, with the loss rate $L_2$ (i.e. 1 - recovery rate) for party $2$ , the total loss of party $1$ at $\tau_2$ is $L_2 \,(1 - \delta_{2})^+ P^+_{\tau_2}$. On the other hand, suppose that the MtM value is negative $(P_{\tau_2} < 0)$. Party $1$ has a loss only if party $1$ puts collateral more than the MtM value $P_{\tau_2}$,  i.e. the contract is over-collateralized $(\delta_{1} \ge 1)$. In this case, party $1$'s total loss is the product of the party $2$'s loss rate and the exposure, i.e. $L_2 \,(\delta_{1}-1)^+ P^-_{\tau_2}$. Therefore, the remaining value of the party $1$'s position at the default time $\tau_2$ is
\begin{align}
P_{\tau_2} - L_2 \, (1 - \delta_{2})^+ \, P_{\tau_2}^+ - L_2 \, (\delta_{1} - 1)^+ \, P_{\tau_2}^{-}\,. \label{liquidation_value_tau2}
\end{align}

Next, we consider the case when party $1$ defaults first, i.e. $\tau = \tau_1$. We denote by $L_1$ the loss rate of party $1$. If the MtM value of party $1$'s position at the default is negative $(P_{\tau_1} < 0)$ and the contract is under-collateralized $(\delta_{1} < 1)$, party $2$'s loss is $L_1 \,(1 - \delta_{1})^+ P_{\tau_1}^-$.  Similarly, when the MtM value is positive $(P_{\tau_1} \ge 0)$ and the contract is over-collateralized $(\delta_{1} \ge 1)$, party $2$ incurs a loss of the amount $L_1 \,(\delta_{2} - 1)^+ P_{\tau_1}^+$. Because of the bilateral nature of the contract, party $2$'s loss is party $1$'s gain. Therefore, at the default time $\tau_1$, the value of party $1$'s position is
\begin{align}
P_{\tau_1} + L_1 \, (1 - \delta_{1})^+ \, P_{\tau_1}^{-} + L_1 \, (\delta_{2} - 1)^+ \, P_{\tau_1}^+\,. \label{liquidation_value_tau1}
\end{align}

Moreover, the market participant is exposed to funding cost associated with collateralization over the period since the collateral rate and funding rate do not coincide with the risk-free rate. When the liquidation value of the contract $P_t$ is positive to party $1$ at time $t$, party $2$ posts collateral $\delta_{2} \, P_t^+$ to party $1$. To keep the collateral, party $1$ continuously pays collateral interest at rate $c_{2}$ to party $2$ until any default time or expiry. On the other hand, when $P_t$ is negative to party $1$, party $1$ borrows $\delta_{1} \, P_t^-$ to post collateral to party $2$. As a result, party $1$ receives interest payments at rate $c_{1}$ proportional to collateral amount. We call $c_{i}$ the effective collateral rate of party $i$ $(i = 1,2)$, which is the nominal collateral rate minus the funding cost rate of party $i$. The rates $c_{1}$ and $c_{2}$ can be both negative  in practice if the funding costs are high.   Therefore, party $1$ has the following cash flow generated by the collateral and effective collateral rates\,:
 \begin{align}
   \indic{t < \tau} \left( c_{1} \, \delta_{1} \, P_t^- - c_{2} \, \delta_{2} \, P_t^+ \right), \quad 0 \le t \le T \,. \label{collateral_funding_cost}
 \end{align}

The aforementioned cash flow analysis implies that the pre-default MtM value with \emph{counterparty risk} (CR) provision is given by
\begin{align}
	P(t,s,x ) &= \E_{t,s,x} \bigg[ e^{- \int_t^T (r_u + \lambda_u) \, du} \, g(S_T,X_T) +  \int_t^T  e^{- \int_t^u (r_v + \lambda_v) \, dv} \left( h(S_u,X_u) + \lambda^{(0)}_u \, l(t,X_u)   \right) \, du \notag \\
					      & +  \int_t^T \lambda^{(2)}_u \, e^{- \int_t^u (r_v + \lambda_v) \, dv} \left( (1 - L_2 \, (1 - \delta_{2})^+) \, P_u^+ - (1 + L_2 \, (\delta_{1} - 1)^+) \, P_u^{-} \, \right) du  \notag \\
					  & + \int_t^T \lambda^{(1)}_u \, e^{- \int_t^u (r_v + \lambda_v) \, dv} \left( (1 + L_1 \, (\delta_{2} - 1)^+) \, P_u^+ - (1 - L_1 \, (1 - \delta_{1})^+) \, \,  P_u^{-} \, \right) du \nonumber \\
& +  \int_t^T  e^{- \int_t^u (r_v + \lambda_v) \, dv} \left(c_{1} \delta_{1} P_u^- - c_{2}  \delta_{2} P_u^+  \right) \, du \, \bigg]. \label{Expectation_With_Original}
\end{align}
The first line accounts for the terminal cash flow, the dividend, and the payoff at the reference asset's default ($\tau = \tau_0$). The second line and the third line are the cash flows at party 2's default $(\tau = \tau_2)$ in \eqref{liquidation_value_tau2} and party 1's default $(\tau = \tau_1)$ in \eqref{liquidation_value_tau1}, respectively. The last line results from the collateral and effective collateral rates in \eqref{collateral_funding_cost}. To simplify, we introduce the following notations
\begin{align}
\tilde{r}(t,s,x) &= r(t,x) + \lambda(t,s,x)\,,  \label{tilder}\\
 \alpha(t,s,x) &= L_2 \, \lambda^{(2)}(t,s,x) \, (1 - \delta_{2} )^+ -L_1 \, \lambda^{(1)}(t,s,x) \, (\delta_{2} - 1)^+ + c_{2} \,  \delta_{2}\,, \label{alpha}\\
\beta(t,s,x) &=  L_1 \, \lambda^{(1)}(t,s,x) \,(1 - \delta_{1})^+ -  L_2 \, \lambda^{(2)}(t,s,x) \, (\delta_{1} - 1 )^+ + c_{1} \,  \delta_{1}\,, \label{beta} \\
f(t,s,x,y) &= h(s,x) + \lambda^{(0)}(t,s,x) \, l(t,x) + (\lambda^{(1)} + \lambda^{(2)} - \beta)(t,s,x) y +  (\beta - \alpha)(t,s,x) \, y^+\,. \label{def:ftsxy}
\end{align}
This allows to express \eqref{Expectation_With_Original} in the equivalent but simplified form:
\begin{align}
	P(t,s,x )&= \E_{t,s,x} \bigg[ e^{- \int_t^T \tilde{r}_u \, du} \, g(S_T,X_T) +  \int_t^T  e^{- \int_t^u \tilde{r}_v \, dv} f(u,S_u,X_u,P_u) \, du  \bigg]\,, \label{Expectation_With}
\end{align}
where $\tilde{r}_t \equiv \tilde{r}(t,S_t,X_t)$ as defined in \eqref{tilder}. 

\begin{remark} \label{remark:without}
As an alternative of MtM value with CR provision, the liquidation value at the time of default can be evaluated as the CRF value of the claim. In other words, at the default time $\tau < \tau_0$, the liquidation value is evaluated as $\Pi_{\tau}$ rather than $P_{\tau}$. Replacing $P_u$ in \eqref{Expectation_With} with $\Pi_u$ for $t \le u \le T$ gives the MtM value without CR provision (see \cite{henry2012counterparty}):
\begin{align}
	\hP(t,s,x ) & = \E_{t,s,x} \bigg[ e^{- \int_t^T \tilde{r}_u \, du} \, g(S_T,X_T) +  \int_t^T  e^{- \int_t^u \tilde{r}_v \, dv} f(u,S_u,X_u,\Pi_u) \, du  \bigg]. \label{Expectation_Without}
\end{align}
\end{remark}

To conclude this section, we summarize the symbols and their financial meanings in Table \ref{table:defnition} which we will use frequently throughout this paper.
\begin{table}[ht]
\begin{small}
\centering 
\begin{tabular}{c | c | c | c} 
\hline 
Symbol & Definition & Symbol & Definition for party $i \in \{1,2\}$   \\ [0.5ex] 
\hline 
\hline
$P$ & MtM value with CR provision & $R_i$ & Recovery rate \\ 
$\hP$ & MtM value without CR provision  & $c_{i}$ & Effective collateral rate \\
$\Pi$ & CRF value & $\delta_{i}$ & Collateralization ratio\\
$\tau_0$ & Default time of reference asset & $\tau_{i}$ & Default time\\
 $\lambda^{(0)}$ & Default intensity of reference asset & $\lambda^{(i)}$ & Default intensity \\
\hline 
\end{tabular}
\caption{\small{Summary of notations.} \label{table:defnition}} 
\end{small}
\end{table}

\subsection{Bid-Ask Prices} \label{general:bid-ask}
In  OTC trading,   market participants, such as dealers,  may take a long position  as a buyer or a short position as a seller.  Without counterparty risk, the buyer's CRF \textit{bid price} $\Pi^b(t,s,x)$ for a claim with payoff $(g,h,l)$ is  given by  \eqref{Expectation_Pi}.  
The MtM value of the seller's position satisfies \eqref{Expectation_Pi} by replacing $(g,h,l)$ with  $(-g,-h,-l)$, the negative of  which  gives the  seller's CRF \textit{ask price} $\Pi^s(t,s,x)$.  In fact, the bid-ask prices  are identical, i.e. $\Pi^b(t,s,x) = \Pi^s(t,s,x)$.

 Similarly for the case with counterparty risk provision, the buyer's bid price   is  $P^b(t,s,x) = P(t,s,x)$ as  in \eqref{Expectation_With}. The seller's ask price  is given by
\begin{align}
	P^s(t,s,x ) &= \E_{t,s,x} \bigg[ e^{- \int_t^T \tilde{r}_u \, du} \, g(S_T,X_T) +  \int_t^T  e^{- \int_t^u \tilde{r}_v \, dv} \tilde{f}(u,S_u,X_u,P^s_u) \, du \bigg], \label{Expectation_With_seller}
\end{align}
where \begin{align}
\tilde{f}(t,s,x,y) = h(s,x) + \lambda^{(0)}(t,s,x) l(t,x) + (\lambda^{(1)} + \lambda^{(2)} - \beta)(t,s,x) y - (\beta - \alpha)(t,s,x) y^-\,. \label{def:f_tilde}
\end{align}

Since  $\tilde{f}(t,s,x,y)$ is different from $f(t,s,x,y)$ in \eqref{def:ftsxy},  the symmetry observed in the CRF prices generally  no longer holds in the  presence of bilateral counterparty risk.  Most importantly, such an  asymmetry generates  bid-ask spreads for defaultable claims. For any contract with counterparty risk  provision, the participant can quote two  prices: $P^b(t,s,x)$ as a buyer  or  $P^s(t,s,x)$   as a seller. In addition, since the payoff components $(g,h,l)$ can be negative, the bid and/or ask prices also can be negative (see Figure \ref{Bid-Ask_Stock_Digital}).

The total valuation adjustment (XVA) is defined as a deviation of the MtM value from the CRF value, namely, $\Pi - P^b$ for a long position and $P^s - \Pi$ for a short position. The bid-ask spread accounting for the XVA with CR provision is defined as $\Spread(t,s,x) = P^s(t,s,x) - P^b(t,s,x)$.

The two factors $\alpha$ and $\beta$ in \eqref{alpha} and \eqref{beta} that appear in $f$ and $\tilde{f}$ summarize the effects of counterparty risk and collateralization on the bid-ask prices.  Specifically, $\alpha$ explains the effect of positive counterparty exposure of the MtM value $P^+_u$ while $\beta$ explains the effect of negative exposure $P^-_u$. When the two parameters have the same value ($\alpha = \beta$), the two functions $f$ and $\tilde{f}$ in \eqref{def:ftsxy} and \eqref{def:f_tilde} are identical. Therefore, the bid-ask prices  $P^b$ and $P^s$ are   equal. Such a price symmetry also arises in a number of other scenarios: (i) when both parties have perfect collateralization ratio ($\delta_1 = \delta_2 = 1$) and the same effective collateral rate ($c_1 = c_2$); (ii) when both parties have zero collateralization ratio $(\delta_1 = \delta_2 = 0)$ with the same  effective default rate ($L_1 \, \lambda^{(1)}= L_2 \, \lambda^{(2)}$), and  (iii) when both parties have the same effective collateral rate $(c_1 = c_2)$ with the same effective default rate and collateralization ratio ($L_1 \, \lambda^{(1)} = L_2 \, \lambda^{(2)} $, $\delta_1 = \delta_2$). 

\begin{remark} When the counterparty risk-free value $\Pi$ is used to estimate the liquidation value upon default, the seller's bid price is given by
\begin{align}
	\hP^s(t,s,x ) &= \E_{t,s,x} \bigg[ e^{- \int_t^T \tilde{r}_u \, du} \, g(S_T,X_T) +  \int_t^T  e^{- \int_t^u \tilde{r}_v \, dv} \tilde{f}(u,S_u,X_u,\Pi_u) \, du \bigg]\,, \label{Expectation_Without_seller}
\end{align}
where $\tilde{f}$ is defined in \eqref{def:f_tilde}. In contrast to \eqref{Expectation_With}, the price function on the LHS does not appear on the RHS.  \end{remark}


\section{Fixed Point Method} \label{sec:fixed_point}
The defining equation  \eqref{Expectation_With} has a recursive form whereby the price function $P$ appears on both sides.   Denote the spacial domain by $\D : = \R^+ \times \R$. For any  function $w \in C_b([0,T] \times \D, \R)$, we define the operator $\mathcal{M}$ by
\begin{align}
 (\mathcal{M}w)(t,s,x) =  \E_{t,s,x} \bigg[ e^{- \int_t^T \tilde{r}_u \, du} \, g(S_T,X_T) +  \int_t^T  e^{- \int_t^u \tilde{r}_v \, dv} f(u,S_u,X_u,w(u,S_u,X_u)) \, du  \bigg]\,. \label{def:M}
\end{align}
Then, we recognize from  \eqref{Expectation_With} that the  MtM value with counterparty risk provision satisfies $P = \M P$. This motivates  us to show  that the operator  $\mathcal{M}$ has a unique fixed point, and therefore, guarantees the existence and uniqueness of the MtM value $P$.   

We discuss our fixed point approach by first showing that the operator $\mathcal M$  defined in \eqref{def:M} preserves boundedness and continuity. To this end, we outline a number of conditions according to \cite{heath2000martingales}.
\begin{enumerate}[(C1)]
 \newcounter{enumTemp}
\stepcounter{enumTemp}
 \item We define
\begin{align*}
	  \Gamma(t,s,x) = \left[ \begin{array}{c}
r(t,x) + \lambda^{(0)}(t,s,x)  \\
b(t,x) \end{array} \right] \quad \text{and} \quad  \Sigma(t,s,x) &= \left[ \begin{array}{cc}
\sigma(t,s) \, s & 0  \\
\rho \, \eta(t,x) \,  & \sqrt{1 - \rho^2} \, \eta(t,x)   \end{array} \right].
	\end{align*}
The coefficients $\Gamma$ and $\Sigma$  are locally Lipschitz-continuous in $s$ and $x$, uniformly in $t$. That is, for each compact subset $F$ of $\D$, there is a constant $K_F < \infty$ such that for $\psi \in \{ \Gamma, \Sigma \}$,
 \begin{align*}
 	| \psi(t,s_1,x_1) - \psi(t,s_2,x_2)| \le K_F ||(s_1,x_1) - (s_2,x_2)|| \quad \forall t \in [0,T],\, (s_1,x_1)\,,(s_2,x_2) \in F,
 \end{align*}
where $\| \cdot \|$ is the Euclidean norm in $\R^2$.
 \item For all $(t,s,x) \in [0,T) \times \D$, the solution $(S,X)$ neither explodes nor leaves $\D$ before $T$, i.e.
  \begin{align*}
   \P \left( \sup_{t \le u \le T} \| (S_u,X_u) \| < \infty \right) = 1 \quad \text{and} \quad \P \bigg( (S_u,X_u) \in \D \,, \forall u \in [t,T] \bigg) = 1\,.
  \end{align*}
\item The functions $h$ and $g$ are bounded and continuous,  and $r$, $l$ and $\lambda^{(i)}$,\, $i \in \{0,1,2\}$, are positive, continuous and bounded.
  \setcounter{enumTemp}{\theenumi}
\end{enumerate}
\newpage
\begin{lemma} \label{lemma_1} Given any function $w \in C_b([0,T] \times \D, \R)$, it follows that  $v := \M w \in C_b([0,T] \times \D, \R)$.
\end{lemma}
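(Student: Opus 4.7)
The plan is to establish the two required properties of $v := \M w$ separately, exploiting the boundedness assumptions in (C3) for the first and the SDE regularity in (C1)--(C2) for the second.

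For boundedness, I first note that $\tilde{r} = r + \lambda^{(0)} + \lambda^{(1)} + \lambda^{(2)}$ is nonnegative by (C3), so the discount factor $e^{-\int_t^u \tilde r_v\,dv}$ lies in $[0,1]$ for all $t \le u \le T$. Since $g$, $h$, $l$ and the three intensities $\lambda^{(i)}$ are bounded by (C3), and since $\delta_i, c_i, L_i$ are constants, the coefficients $\alpha$ and $\beta$ in \eqref{alpha}--\eqref{beta} are bounded. Combined with $\|w\|_\infty < \infty$, this makes $f(\cdot,\cdot,\cdot,w(\cdot,\cdot,\cdot))$ uniformly bounded, and a direct estimate inside the expectation in \eqref{def:M} yields
\begin{equation*}
\|v\|_\infty \;\le\; \|g\|_\infty + T\bigl(\|h\|_\infty + \|\lambda^{(0)}\|_\infty\|l\|_\infty + C\,\|w\|_\infty\bigr),
\end{equation*}
where $C$ depends only on the bounds on $\lambda^{(i)}, L_i, \delta_i, c_i$.

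For continuity, I would fix an arbitrary $(t_0,s_0,x_0) \in [0,T]\times\D$ and consider a sequence $(t_n,s_n,x_n) \to (t_0,s_0,x_0)$. Under (C1) and (C2), the SDE system \eqref{model:stock}--\eqref{model:intensity} admits a unique non-exploding strong solution $(S^{t,s,x}, X^{t,s,x})$ whose paths depend continuously (almost surely, uniformly on $[0,T]$) on the initial data — this is a standard stability result for SDEs with locally Lipschitz coefficients under a no-explosion guarantee. Continuity of $g$, $r$, $l$, the intensities $\lambda^{(i)}$, and of $w$ (by assumption) then implies, by composition, that the terminal payoff and the integrand in \eqref{def:M} converge pointwise along the sequence. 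Since everything is uniformly bounded by the estimates of the first paragraph, the dominated convergence theorem yields $v(t_n,s_n,x_n) \to v(t_0,s_0,x_0)$.

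The main technical nuisance will be the dependence of the integration lower limit on the starting time $t$. I would handle this by rewriting the time integral as $\int_0^T \mathbf{1}_{\{u \ge t\}}(\cdot)\,du$ and noting that $\mathbf{1}_{\{u\ge t_n\}} \to \mathbf{1}_{\{u\ge t_0\}}$ for almost every $u$; dominated convergence then applies jointly in $(\omega,u)$ on $\Omega\times[0,T]$. An alternative, and arguably cleaner, route is to invoke directly the Feynman--Kac-type continuity result of \cite{heath2000martingales}, since the conditions (C1)--(C3) are set up precisely to match those hypotheses; the bounded continuity of the source term $f(u,s,x,w(u,s,x))$ and the terminal $g(s,x)$ then delivers bounded continuity (indeed $C^{1,2}$ regularity on the interior) of $v$ for free.
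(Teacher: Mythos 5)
Your proof is correct and follows essentially the same route as the paper's: boundedness comes from the uniform bounds in (C3) together with $\tilde r \ge 0$, and continuity comes from a.s.\ continuity of the stochastic flow in the initial data combined with the dominated convergence theorem applied to the (bounded) random variable inside the expectation. The only differences are presentational --- you make explicit the handling of the $t$-dependent lower integration limit and mention the alternative of invoking the Feynman--Kac continuity result of \cite{heath2000martingales} directly, both of which the paper leaves implicit.
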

\begin{proof} The boundedness of $v$ follows directly from that of $w$, $h$,  $g$, $r$, and $\lambda^{(i)}$ (see condition $(C3)$). To prove the continuity of $v$, we first observe that 
\begin{align}
	(t,s,x) \mapsto  e^{- \int_t^T \tilde{r}_u \, du} \, g(S_T,X_T) +  \int_t^T  e^{- \int_t^u \tilde{r}_v \, dv} f(u,S_u,X_u,w(u,S_u,X_u)) \, du \label{interim_equation}
\end{align}
is  continuous $\P$-a.s. Indeed,  the continuity of $(S,X)$ implies that the mapping $(t,s,x) \mapsto g(S_T,X_T)$ is  continuous $\P$-a.s. Also, $(t,s,x,u) \mapsto \tilde{r}(u,S_u,X_u)$  and $(t,s,x,u) \mapsto f(u,S_u,X_u,w(u,S_u,X_u))$ are  uniformly continuous and bounded $\P$-a.s. on compact subsets of $[0,T] \!\times \!\D\! \times\! [t,T]$. Hence, the mapping in \eqref{interim_equation} is continuous $\P$-a.s. Recall that $v = \M w$ is the expectation of the RHS  in  \eqref{interim_equation}, which  is  bounded continuous, so $v$ is also continuous for $(t,s,x)$ $\in$ $[0,T] \times \D$ by Dominated Convergence Theorem.
\end{proof}


\subsection{Contraction Mapping} \label{contraction_mapping}
Next, we show that the mapping $\M$ is a contraction. By the boundedness of $\alpha(t,s,x)$, $\beta(t,s,x)$ and $\lambda^{(i)}(t,s,x)$ for $i \in \{0,1,2\}$, we can define a finite positive  constant  by
\begin{align*}
L &= \sup_{(t,s,x) \in [0,T] \times \D} \left\{|\lambda^{(1)}(t,s,x) + \lambda^{(2)}(t,s,x) - \beta(t,s,x)| + |\beta(t,s,x)- \alpha(t,s,x)| \right\} .
\end{align*} 
\begin{proposition} \label{prop:contraction_mapping}
The mapping $\mathcal{M}$ defined in \eqref{def:M} is a contraction on the space $C_b([0,T] \times \D, \R)$ with respect to the norm
\begin{align}
 \| w \|_{\gamma} := \sup_{(t,s,x) \in [0,T] \times \D} e^{- \gamma \, (T-t)} |w(t,s,x)| \,,\label{def:norm_beta}
\end{align}
for $L < \gamma < \infty$. In particular, $\mathcal{M}$ has a unique fixed point   $w^* \in C_b([0,T] \times \D, \R)$.
\end{proposition}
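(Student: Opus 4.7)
\medskip
\noindent\textbf{Proof plan.} The plan is to verify the hypotheses of the Banach fixed point theorem on the Banach space $(C_b([0,T]\times\D,\R),\|\cdot\|_\gamma)$. Note first that $\|\cdot\|_\gamma$ is equivalent to the uniform norm (they differ by factors between $e^{-\gamma T}$ and $1$), so completeness under $\|\cdot\|_\gamma$ follows from completeness under the sup norm. By Lemma \ref{lemma_1}, $\M$ maps $C_b([0,T]\times\D,\R)$ into itself, so the only remaining task is the contraction estimate.

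The key structural observation is that $f(t,s,x,y)$ in \eqref{def:ftsxy} is Lipschitz in $y$ with constant $L$. Indeed, for $y_1,y_2\in\R$,
\begin{align*}
f(t,s,x,y_1)-f(t,s,x,y_2) &= (\lambda^{(1)}+\lambda^{(2)}-\beta)(t,s,x)(y_1-y_2) + (\beta-\alpha)(t,s,x)(y_1^+-y_2^+),
\end{align*}
and using $|y_1^+-y_2^+|\le|y_1-y_2|$ together with the definition of $L$ gives $|f(t,s,x,y_1)-f(t,s,x,y_2)|\le L\,|y_1-y_2|$. Since the terminal payoff and the terms depending on $(g,h,l)$ cancel when forming $\M w_1-\M w_2$, we get
\begin{align*}
|(\M w_1-\M w_2)(t,s,x)| \le \E_{t,s,x}\!\left[\int_t^T e^{-\int_t^u\tilde r_v\,dv}\,L\,|w_1(u,S_u,X_u)-w_2(u,S_u,X_u)|\,du\right],
\end{align*}
and dropping the nonnegative discount factor $e^{-\int_t^u\tilde r_v\,dv}\le 1$ (by $\tilde r\ge 0$) yields an upper bound by $L\int_t^T\E_{t,s,x}|w_1-w_2|(u,S_u,X_u)\,du$.

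Next I insert the weighted-norm estimate $|w_1(u,S_u,X_u)-w_2(u,S_u,X_u)|\le e^{\gamma(T-u)}\|w_1-w_2\|_\gamma$ pointwise, which survives the expectation since it is deterministic in $u$. Evaluating the elementary integral $\int_t^T e^{\gamma(T-u)}\,du=(e^{\gamma(T-t)}-1)/\gamma$ and multiplying by $e^{-\gamma(T-t)}$ gives
\begin{align*}
e^{-\gamma(T-t)}|(\M w_1-\M w_2)(t,s,x)| \le \frac{L}{\gamma}\bigl(1-e^{-\gamma(T-t)}\bigr)\|w_1-w_2\|_\gamma \le \frac{L}{\gamma}\|w_1-w_2\|_\gamma.
\end{align*}
Taking the supremum over $(t,s,x)$ produces $\|\M w_1-\M w_2\|_\gamma\le(L/\gamma)\|w_1-w_2\|_\gamma$, and the choice $\gamma>L$ makes the constant strictly less than one. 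The Banach fixed point theorem then supplies a unique $w^*\in C_b([0,T]\times\D,\R)$ with $\M w^*=w^*$.

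The main obstacle is purely notational: verifying that $f$ is globally Lipschitz in $y$ with constant exactly $L$ requires treating the positive-part nonlinearity carefully, but since $y\mapsto y^+$ is $1$-Lipschitz this is straightforward. The introduction of the weight $e^{-\gamma(T-t)}$ is the standard device that turns the naive Grönwall-type bound $\int_t^T\|w_1-w_2\|_\infty\,du=(T-t)\|w_1-w_2\|_\infty$ (which only gives contraction for small $T$) into a genuine contraction uniformly in the horizon $T$.
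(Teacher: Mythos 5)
Your proposal is correct and follows essentially the same route as the paper's own proof: establish the Lipschitz bound $|f(t,s,x,y_1)-f(t,s,x,y_2)|\le L|y_1-y_2|$, invoke Lemma \ref{lemma_1} for self-mapping, drop the discount factor via $\tilde r\ge 0$, and use the weighted norm to obtain the factor $L/\gamma<1$. Your added remarks on the $1$-Lipschitz property of $y\mapsto y^+$ and on completeness under the equivalent norm are details the paper leaves implicit, but the argument is the same.
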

\begin{proof} From \eqref{def:ftsxy}, we observe that $ | f(t,s,x,y_1) - f(t,s,x,y_2)|\le L \, | y_1 - y_2 |$,  for $(t,s,x) \in [0,T]\! \times\! \D$. This implies $f$ is \textit{Lipschitz-continuous} in $y$, uniformly over $(t,s,x)$. By Lemma \ref{lemma_1}, the operator $\M$ maps $C_b([0,T] \times \D, \R)$ into itself. For   $(t,s,x) \in [0,T] \times \D$, \,  $w_1, w_2 \in C_b([0,T] \times \D, \R)$, and $\gamma >0$, we have
\begin{align}
 &e^{- \gamma \, (T-t)} | (\mathcal{M} w_1)(t,s,x) - (\mathcal{M} w_2)(t,s,x)| \notag \\
 &= e^{- \gamma \, (T-t)} \bigg\vert \E_{t,s,x} \left[ \int_t^T  e^{- \int_t^u \tilde{r}_v \, dv} ( f(u,S_u,X_u,w_1(u,S_u,X_u) - f(u,S_u,X_u,w_2(u,S_u,X_u)  ) ) \, du  \right]\bigg\vert \notag \\
 &\overset{(i)}\le e^{- \gamma \, (T-t)} \E_{t,s,x} \left[ \int_t^T   \bigg\vert  f(u,S_u,X_u,w_1(u,S_u,X_u) - f(u,S_u,X_u,w_2(u,S_u,X_u))  \bigg\vert \, du  \right]\notag \\
  &\overset{(ii)}\le e^{- \gamma \, (T-t)} \E_{t,s,x} \left[ \int_t^T  e^{-\gamma (T-u)}   L \vert w_1(u,S_u,X_u) - w_2(u,S_u,X_u)\vert  e^{\gamma (T-u)}  \, du  \right]\notag \\
 & \overset{(iii)}\le e^{- \gamma \, (T-t)}  L \| w_1 - w_2 \|_{\gamma} \int_t^T e^{  \gamma \, (T-u)} du \notag \\
 &\le \frac{L}{\gamma} \| w_1 - w_2 \|_{\gamma}\,. \notag
 \end{align}
We have used the condition that $\tilde{r}_v \ge 0$ for $(i)$, and the fact that $f$ is Lipschitz in $y$ for $(ii)$. The inequality $(iii)$ is implied by the  norm   in \eqref{def:norm_beta}. As a result, for any $\gamma > L\ge 0$,  $\mathcal{M}$ is a contraction. 
\end{proof}
The norm $\| \cdot \|_{\gamma}$   is equivalent to the   supremum norm $\|\cdot\|_{\infty}$ on the space $C_b([0,T] \times \D, \R)$. A similar norm is used in \cite{becherer2005classical} and \cite{leung2009accounting} in their studies of reaction diffusion PDEs  arising from indifference pricing.

Using the fact that $\M$ is a contraction proved in Proposition \ref{prop:contraction_mapping}, there exists a sequence of functions $(P^{(n)} )_{n \ge 0}$ that satisfy $P^{(n+1)} = \M P^{(n)}$, $\forall n \ge 0$, and the sequence converges to the fixed point $P$. The convergence does not rely on the choice of the initial function. Indeed, one can simply pick any bounded continuous function as a starting point, e.g. $P^{(0)} = 0$ \,$\forall (t,s,x)$, and iterate to have a sequence $(P^{(n)})_{n \ge 0}$ that resides in $C_b([0,T] \times \D, \R)$. 

Furthermore, we can  show  that for each $n \ge 1$, $P^{(n)} \equiv P^{(n)}(t,s,x)$ is a classical solution of the following inhomogeneous PDE problem:
\begin{align}
  \frac{\partial P^{(n)}}{\partial t} + \L P^{(n)} - \tilde{r}(t,s,x) \, P^{(n)}  + f(t,s,x,P^{(n-1)}) &= 0\,, \notag\\
  P^{(n)}(T,s,x) &= g(s,x)\,, \label{PDE_With_iteration} 
\end{align}
where the operator $\L$ is defined by
\begin{align}
\L &:=  \frac{1}{2} \sigma(t,s)^2 \, s^2 \, \frac{\partial^2}{\partial s^2} + \frac{1}{2} \eta(t,x)^2 \frac{\partial^2 }{\partial x^2} + \rho \, \eta(t,x) \, \sigma(t,s) \, s \, \frac{\partial^2}{\partial s \partial x} \notag \\
& \quad + \tilde{r}(t,s,x) \,s \,  \frac{\partial}{\partial s} + b(t,x) \frac{\partial }{\partial x}\,. \label{operator_L_Equity_Credit}
\end{align}
In order to prove the result, we need the following additional conditions, adapted in our notation from $(A3') - (A3d')$ of \cite{heath2000martingales}.
\begin{enumerate}[(C1)]
\setcounter{enumi}{\theenumTemp}
	\item There exists a sequence $(D_n)_{n \in \N}$ of bounded domains with closure $\bar{D}_n \subset D$ such that $\cup_{n=1}^{\infty} D_n = D$ and each $D_n$ has a $C^2$-boundary.
	  \setcounter{enumTemp}{\theenumi}
	\end{enumerate}		
As in \cite{heath2000martingales}, one can take $\mathcal{D}_n=[\frac{1}{n},n]\!\times\! [-n,n] \subset \R_+\!\times\! \R$.	For each $n$,  we require that
	\begin{enumerate} [(C1)]
	\setcounter{enumi}{\theenumTemp}
	\item $b(t,x)$, $a(t,s,x) := \Sigma(t,s,x) \, \Sigma^{t}(t,s,x)$, and $\tilde{r}(t,s,x)$ be uniformly Lipschitz-continuous on $[0,T] \times \bar{D}_n$, where $\Sigma^{t}$ denotes the transpose matrix of $\Sigma$, 
	\item $a(t,s,x)$ be uniformly elliptic on $\R^2$ for $(t,s,x) \in [0,T) \times D_n$, i.e. there is $\delta_n > 0$ such that $y^{t} \, a(t,s,x) \, y \ge \delta_n \|y\|^2$ for all $y \in \R^2$,
	\item $f(t,s,x,y)$ be uniformly H\"{o}lder-continuous on $[0,T] \times \bar{D}_n \times \R$.
\end{enumerate}
The conditions (C1) -- (C7) are quite  general, and they allow for various models, including  the Heston, CEV, and thus,  geometric Brownian motion  models for equity, and the Ornstein-Uhlenbeck and Cox-Ingersoll-Ross models for the stochastic factor $X$ \citep[Sect. 2]{heath2000martingales}. The triplet $(g,h,l)$, default intensities $\lambda^{(i)}$ and interest rate $r$ can be easily chosen to satisfy the boundedness and continuity conditions in (C3), as we will do in our examples in Sections \ref{Equity_Claim} and  \ref{sec:BCVA_Credit}.

\begin{theorem} \label{prop:PDE_classical_solution}
 Under conditions $(C1) - (C7)$, there exists a sequence of bounded classical solutions $(P^{(n)})\subset C^{1,2}_b([0,T) \times \D, \R)$ of the PDE problem \eqref{PDE_With_iteration} that  converges to the fixed point $P \in C_b([0,T) \times \D, \R)$ of the operator $\M$.
\end{theorem}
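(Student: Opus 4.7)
The plan is to combine the contraction mapping from Proposition \ref{prop:contraction_mapping} with a Feynman--Kac classical-solution theorem of the type established in \cite{heath2000martingales}, whose hypotheses are precisely the conditions (C1)--(C7). First I would construct the iteration by starting from an arbitrary bounded continuous seed, say $P^{(0)} \equiv 0$, and setting $P^{(n)} := \M P^{(n-1)}$. Lemma \ref{lemma_1} keeps each $P^{(n)}$ in $C_b([0,T] \times \D, \R)$, while Proposition \ref{prop:contraction_mapping} guarantees $\|P^{(n)} - P\|_{\gamma} \to 0$, equivalently uniform convergence on $[0,T] \times \D$. This already delivers the desired convergence to the fixed point $P$ and reduces the theorem to showing that each individual iterate is a bounded classical solution of \eqref{PDE_With_iteration}.

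The second step is to recognise that the defining identity $P^{(n)} = \M P^{(n-1)}$, unpacked via \eqref{def:M}, is exactly the Feynman--Kac representation of the solution to the \emph{linear} inhomogeneous Cauchy problem
\begin{equation*}
 \partial_t v + \L v - \tilde{r}(t,s,x)\, v + f(t,s,x,P^{(n-1)}(t,s,x)) = 0, \qquad v(T,\cdot,\cdot) = g,
\end{equation*}
which is precisely \eqref{PDE_With_iteration}. Conditions (C1)--(C2) give existence and non-explosion of $(S,X)$; (C3) ensures boundedness of the data; (C4) gives local Lipschitz regularity of the SDE coefficients and $\tilde{r}$ on each $\bar{D}_m$; (C5) provides uniform ellipticity on $D_m$; and (C6) furnishes the exhaustion by $C^2$-boundary subdomains. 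Under this package the Heath--Schweizer theorem produces a bounded classical solution in $C^{1,2}_b([0,T) \times \D, \R)$ that coincides with the Feynman--Kac expectation, so $P^{(n)} \in C^{1,2}_b([0,T) \times \D, \R)$ and is a classical solution of \eqref{PDE_With_iteration}.

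The hard part is verifying the uniform H\"older hypothesis (C7) for the \emph{composed} source $(t,s,x) \mapsto f(t,s,x,P^{(n-1)}(t,s,x))$ on each $[0,T] \times \bar{D}_m$, since a priori $P^{(n-1)}$ is only bounded and continuous. I would handle this by a bootstrap induction on $n$. For $n=1$, the seed $P^{(0)} \equiv 0$ is smooth and (C7) applied to $f(\cdot,\cdot,\cdot,0)$ gives the required H\"older source, so Heath--Schweizer yields $P^{(1)} \in C^{1,2}_b$. For $n \ge 2$, the inductive hypothesis $P^{(n-1)} \in C^{1,2}_b([0,T) \times \D, \R)$ together with standard interior parabolic estimates makes $P^{(n-1)}$ uniformly H\"older on each compact $[0,T] \times \bar{D}_m$; coupling this with the uniform Lipschitz bound $|f(t,s,x,y_1) - f(t,s,x,y_2)| \le L\,|y_1 - y_2|$ derived in the proof of Proposition \ref{prop:contraction_mapping} transfers H\"older regularity from $P^{(n-1)}$ to the composed source. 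Reapplying the Heath--Schweizer theorem then closes the induction, and combined with the uniform convergence $P^{(n)} \to P$ established earlier completes the proof.
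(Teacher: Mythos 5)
Your proposal is correct and follows essentially the same route as the paper's proof: iterate from a smooth seed, invoke the Heath--Schweizer classical-solution theorem for each linear inhomogeneous PDE after checking that the composed source $f(t,s,x,P^{(n-1)}(t,s,x))$ is uniformly H\"older on each $[0,T]\times\bar{D}_n$ (the paper gets this simply from $P^{(n-1)}\in C^{1,2}_b$ being Lipschitz on compacts combined with (C7), rather than via interior parabolic estimates), and then conclude convergence to the fixed point from Proposition \ref{prop:contraction_mapping}.
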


We provide the proof in Appendix \ref{sect-proofthm}. The insight of Proposition \ref{prop:contraction_mapping} and Theorem \ref{prop:PDE_classical_solution} is that we can construct and solve a series of inhomogeneous but  {linear} PDEs whose classical solutions converge  to  a unique fixed point price function   $P$ as in  \eqref{Expectation_With}. Recent studies by   \cite{burgard2011partial} and \cite{henry2012counterparty}  evaluate the MtM value $P(t,s,x)$ by working  with  the associated   nonlinear  PDE of the form:
\begin{align}
	 \frac{\partial P}{\partial t} + \L P  - \tilde{r}(t,s,x)  \, P  +  f(t,s,x,P) = 0\,, \label{PDE_With} 
\end{align}
for $(t,s,x) \in [0,T) \times \D$, with  terminal condition  $P(T,s,x) =  g(s,x)$, for $(s,x) \in \D$. The nonlinearity of \eqref{PDE_With}  poses major  challenges  on analyzing and numerically solving  for $P$.  \cite{henry2012counterparty} provides a method to  approximate the solution  that involves replacing  the nonlinear term $f$ with a polynomial and  simulating a  marked branching diffusion. {This method, however, does not guarantee that the solution from simulation will resemble   the solution of the nonlinear PDE, and does not ensure any regularity, such as continuity or boundedness of, either solution.   \cite{henry2012counterparty} provides  conditions on the chosen polynomial  to avoid a ``blow-up" of the simulation algorithm.  In contrast, our fixed point methodology circumvents this issue  by  establishing   that the  pricing definition   in \eqref{Expectation_With}  is  a contraction mapping, as opposed to  working with the nonlinear PDE. As a result, we   solve a series of linear PDE problems  with bounded classical solutions. In the limit, a unique bounded continuous   MtM value $P$ is obtained. 
 
\subsection{Numerical Implementation} \label{numerical_solution}
Our contraction mapping methodology lends itself to a recursive numerical algorithm. As mentioned in the previous section, we iteratively solve a sequence of linear inhomogeneous PDEs \eqref{PDE_With_iteration}. At each iteration, the error is measured in terms of the  maximum difference between two consecutive solutions $P^{(n)}$ and $P^{(n-1)}$ over the entire domain $[0,T] \times \D$. We continue the iteration procedure until the error is less than the pre-defined tolerance level  $\bar{\epsilon}$.

For implementation, we use the standard  Crank-Nicolson finite difference method (FDM) to obtain the values (see, among others, \cite{wilmott1995mathematics} and  \cite{strikwerda2007finite}). We restrict the domain $[0,T] \times \D $ to a finite domain $\bar{\D} = \{ (t,s,x) \,:\, 0 \le t \le T,\, \underline{X} \le x \le \bar{X}, 0 \le s \le \bar{S}\}$. The parameters $\bar{S}$, $\underline{X}$ and $\bar{X}$ are sufficiently large enough to preserve the accuracy of the numerical solutions. We discretize the function $P^{(n)}(t,s,x)$ as $P^{(n)}(t_i,s_j,x_k)$ where $i \in \{0,...,N\}$, $j \in \{0,...,M\}$ and $k \in \{0,...,L\}$ with $\Delta t =  T / N$, $\Delta s = \bar{S} / M$, $\Delta x = (\bar{X} - \underline{X}) / L$ and $t_i = i \, \Delta t$, $s_j = j \, \Delta s$, $x_k = k \, \Delta x$. Our numerical procedure is summarized in Algorithm \ref{Algorithm:Fixed_Point}.

\begin{algorithm}
        \caption{\small{Fixed Point Algorithm for Evaluating the MtM Value $P$}}
        \begin{algorithmic} \label{Algorithm:Fixed_Point}
            \STATE set $n = 1$,\, $P^+ = P^{(0)}$
            \STATE solve for $P^{(1)}$ from PDE \eqref{PDE_With_iteration}
            \STATE set $\epsilon = \| P^{(1)} - P^{(0)} \|_{\infty}$
            \WHILE{$ \epsilon > \bar{\epsilon}$}
            \STATE set $n = n+1$, \, $P^+ = P^{(n-1)}$
            \STATE solve for $P^{(n)}$ from PDE \eqref{PDE_With_iteration}
			\STATE set $\epsilon = \| P^{(n)} - P^{(n-1)} \|_{\infty}$
            \ENDWHILE
            \RETURN $P^{(n)}$
        \end{algorithmic}
    \end{algorithm}

For the CRF value, we solve the linear PDE associated with $\Pi \equiv \Pi(t,s,x)$ in \eqref{Expectation_Pi}, namely,
\begin{align}
	\frac{\partial \Pi}{\partial t} + \L \, \Pi - \tilde{r}(t,s,x) \, \Pi(t,s,x) + h(s,x) + \lambda^{(0)}(t,s,x) \, l(t,x) = 0\,,     \label{PDE_RF}
\end{align}
 for $(t,s,x) \in [0,T) \times \D$, with terminal condition $\Pi(T,s,x) =  g(s,x)$, for $(s,x) \in \D$. The CRF value becomes an input to the PDE problem for the MtM value without provision, given by 
\begin{align}
	\frac{\partial \hP}{\partial t} + \L \hP - \tilde{r}(t,s,x) \, \hP + f(t,s,x,\Pi(t,s,x)) = 0 \,, \label{PDE_Without}
\end{align}
for $(t,s,x) \in [0,T) \times \D$, and   $\hP(T,s,x) =  g(s,x)$, for $(s,x) \in \D$. Again, we apply the Crank-Nicolson FDM method to compute their values.

\begin{remark} The assumption on the boundedness of the default intensities does not encapsulate the local intensity model, where the  reference default intensity function is  of the form: $\lambda(t,s) =  c s^{-p}$, for some $p,c>0$. See \cite{carr2006jump,carr2010local,linetsky2006pricing,madan1998pricing}, among others. As a close alternative, one can cap the exploding intensity and set $\lambda(t,s) =  c s^{-p}\wedge M$ for  some arbitrarily chosen  large constant $M$.  With this modification, our  model and contract mapping results still apply. Since the default intensity function is finite except at $s=0$,   in numerical implementation using finite difference, one can  set the default intensity at the layer $s=0$ to be the large value $M$ instead of an  infinite value.
\end{remark}

\section{Defaultable Equity Derivatives with Counterparty Risk} \label{Equity_Claim}
We now apply our valuation methodology to value  a number of   defaultable equity claims. Specifically, we will  derive and compare the  MtM values with and without counterparty risk provisions as well as the  CRF value. Moreover, we will analyze and illustrate  the  bid-ask prices.    

As a special case of \eqref{model:stock}, we model the pre-default stock price process by 
\begin{align}
 dS_t  &= \left(r +  \lambda^{(0)}\right) \, S_t \, dt + \sigma \, S_t \,  dW_t\,,\label{Sequity}
\end{align}
where we assume constant interest rate $r$ and default rates $\lambda^{(i)}$, $i \in \{0,1,2\}$. In addition, we  let $\lambda = \sum_{k=0}^2 \lambda^{(k)}$, and set 
\begin{align}
\alpha &= L_2 \, \lambda^{(2)} (1 - \delta_{2} )^+ - L_1 \, \lambda^{(1)} (\delta_{2} - 1)^+ + c_{2} \,  \delta_{2}\,, \label{alpha_constant}\\
\beta &= L_1 \, \lambda^{(1)} (1 - \delta_{1})^+ - L_2 \, \lambda^{(2)} (\delta_{1} - 1 )^+ + c_{1} \,  \delta_{1}\,. \label{beta_constant}
\end{align}

\subsection{Call Spreads}
Let us consider a generic call spread with the terminal payoff\,:
\begin{align}
	g(S_T) &=
	\begin{cases}
	   	m_2 & \quad \text{if} \quad  S_T > K + \epsilon_2\,,\\
	   \frac{(m_1 + m_2)}{\epsilon_1 + \epsilon_2} (S_T - K)    &\quad \text{if}\quad  K - \epsilon_1 \le S_T \le K + \epsilon_2\,, \\
	   -m_1 &\quad \text{if} \quad S_T < K - \epsilon_1\,,
	\end{cases} \label{terminal_g_S}
\end{align}
with $m_1, m_2, \epsilon_1, \epsilon_2 > 0$, where $m_1 / \epsilon_1 = m_2 / \epsilon_2 =: M$. The payoff resembles that of a long position of $M$ call options with strike $K - \epsilon_1$,  a short position of $M$ call options with strike $K + \epsilon_2$ and short $m_1$ notional of zero coupon bond with the same maturity. Similar positions can be achieved when two OTC traders buy and sell call options with different strikes, plus/minus some cash. As $\epsilon_1$ and $\epsilon_2$ in \eqref{terminal_g_S} go to zero, the payoff converges to that of a digital call position covered in \cite{henry2012counterparty}.

With the terminal payoff $g$ in \eqref{terminal_g_S},  dividend $h= 0$, and value at   reference default   $l(\tau_0) = -m_1 \, e^{-r (T-\tau_0)}$, the CRF value of the spread contract admits the formula
\begin{align*}
\Pi(t,s) = &M \, \left( C^{BS}(t, s \,; T , K - \epsilon_1,r + \lambda^{(0)},\sigma) -  C^{BS}(t, s \, ; T , K + \epsilon_2,r + \lambda^{(0)},\sigma) \right) - e^{-r (T-t)} \, m_1\,, 
\end{align*}
where $C^{BS}(t,s\,;T,K,r,\sigma)$ is the Black-Scholes call option price at time $t$ with spot price $s$, maturity $T$, strike price $K$, risk-free rate $r$ and volatility $\sigma$. From \eqref{Expectation_With}, the MtM value with counterparty risk provision is given by
\begin{align}
	P(t,s) &= \E_{t,s} \bigg[ e^{- (r + \lambda) (T-t)} \, g(S_T) +  \int_t^T  e^{- (r + \lambda) (u-t)} \, f(u,S_u,P_u) \, du   \bigg]\,, \label{Expectation_With_Digital}
\end{align}
where $f(t,s,y) := \lambda^{(0)} l(t) + (\lambda^{(1)} + \lambda^{(2)} - \beta)  y + (\beta - \alpha) y^+$.
The MtM value without counterparty risk provision $\hP(t,s)$ is similarly obtained replacing $P_u$ in \eqref{Expectation_With_Digital} with $\Pi_u$. 

The   model for $S$ in  \eqref{Sequity},   the triple  $(g,h,l)$,  and other (constant) coefficients satisfy the conditions (C1)-(C7) with domain $\mathcal{D} = \R_+$ (see also  \citep[Sect. 2]{heath2000martingales}). 
We numerically compute the MtM value $P(t,s)$ by Algorithm 1 from Section \ref{numerical_solution}. For the iterative PDE \eqref{PDE_With_iteration}, we adopt the coefficients in this section and the terminal payoff $g(S_T)$ given in \eqref{terminal_g_S}. In Table \ref{table:convergence_digital}, we show the convergence of the MtM values with provision for three different contracts where $\epsilon_1 = \epsilon_2 = \{2,1,0.01\}$. The first column of each contract shows the value of the MtM value of the contract at spot $s= 10$ for each step $0 \le n \le 5$. The second column of each contract shows the supremum norm $\epsilon = \| P^{(n)} - P^{(n-1)}\|_{\infty}$ for each step $0 \le n \le 5$. The algorithm stops at $n = 5$ for all three contracts.
\begin{table}[ht]
\begin{small}
  \centering
    \begin{tabular}{c|cc|cc|cc}
    \hline
         & \multicolumn{2}{c|}{$\epsilon_1 = \epsilon_2 = 2$} & \multicolumn{2}{c|}{$\epsilon_1 = \epsilon_2 = 1$}      & \multicolumn{2}{c}{$\epsilon_1 = \epsilon_2 = 0.01$} \\
    \hline
    \hline
          & $P^{(n)}(0,10)$ & $\epsilon$ & $P^{(n)}(0,10)$ & $\epsilon$  & $P^{(n)}(0,10)$ & $\epsilon$ \\
    $n=0$   & 0    & - & 0    & -     & 0    & - \\
    $n=1$   & -0.1197 & 0.9048 & -0.1293 & 0.9048 & -0.1326 & 0.9048 \\
    $n=2$   & -0.1387 & 0.0992 & -0.1490 & 0.0992 &  -0.1526 & 0.0992 \\
    $n=3$   & -0.1377 & 0.0060 & -0.1479 & 0.0060 & -0.1515 & 0.0060 \\
    $n=4$   & -0.1377 & 0.0002 & -0.1480 & 0.0002 & -0.1516 & 0.0002 \\
    $n=5$   & -0.1377 & $< 10^{-5}$ & -0.1480 & $< 10^{-5}$ & -0.1516 & $< 10^{-5}$ \\
    \hline
    \end{tabular}%
      \caption{\small{Convergence of the MtM values with provision $P(0,s)$ of call spread contract at spot price $s = 10$ (at-the-money) and $m_1 = m_2 = 1$. Parameters: $K = 10$, $T=2$, $t = 0$, $r = 2\%$, $\sigma = 25\%$, $\lambda^{(0)} = 3\%$, $\lambda^{(1)} = 5\%$, $\lambda^{(2)} = 15\%$, $R_1 = 40\%$, $R_2 = 40\%$, $\delta_{1} = \delta_{2} = 0$, $\bar{\epsilon} = 10^{-5}$, $\bar{S} = 40$, $\Delta S = 0.01$, $\Delta t = 1/1000$. }}
\label{table:convergence_digital} 
\end{small}
\end{table} 

Let us  visualize the convergence of the MtM value with CR provision $P^{(n)}(0,s)$ in Figure \ref{BCVA_Digital} (left). Using the tolerance level $\bar{\epsilon} = 10^{-5}$  for the maximum difference over each iteration,  the  algorithm stops after $4$ iterations.  As we can see, the price functions $P^{(3)}(0,s)$ and $P^{(4)}(0,s)$ over $0 \le s \le \bar{S} = 40$ are not visibly distinguishable.

\begin{figure}[h!]
    \centering
    \includegraphics[width=0.45\textwidth]{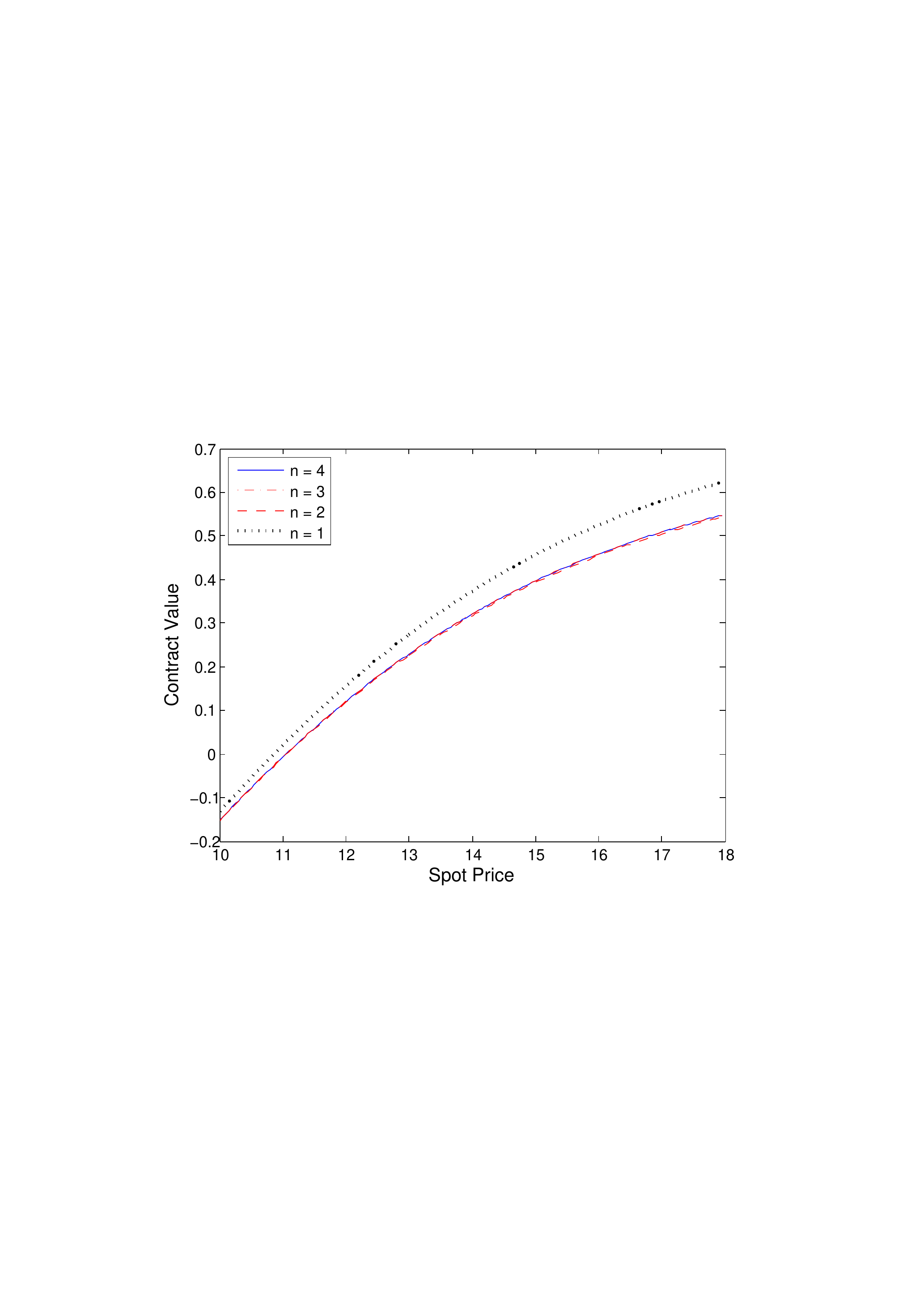}
    \includegraphics[width=0.45\textwidth]{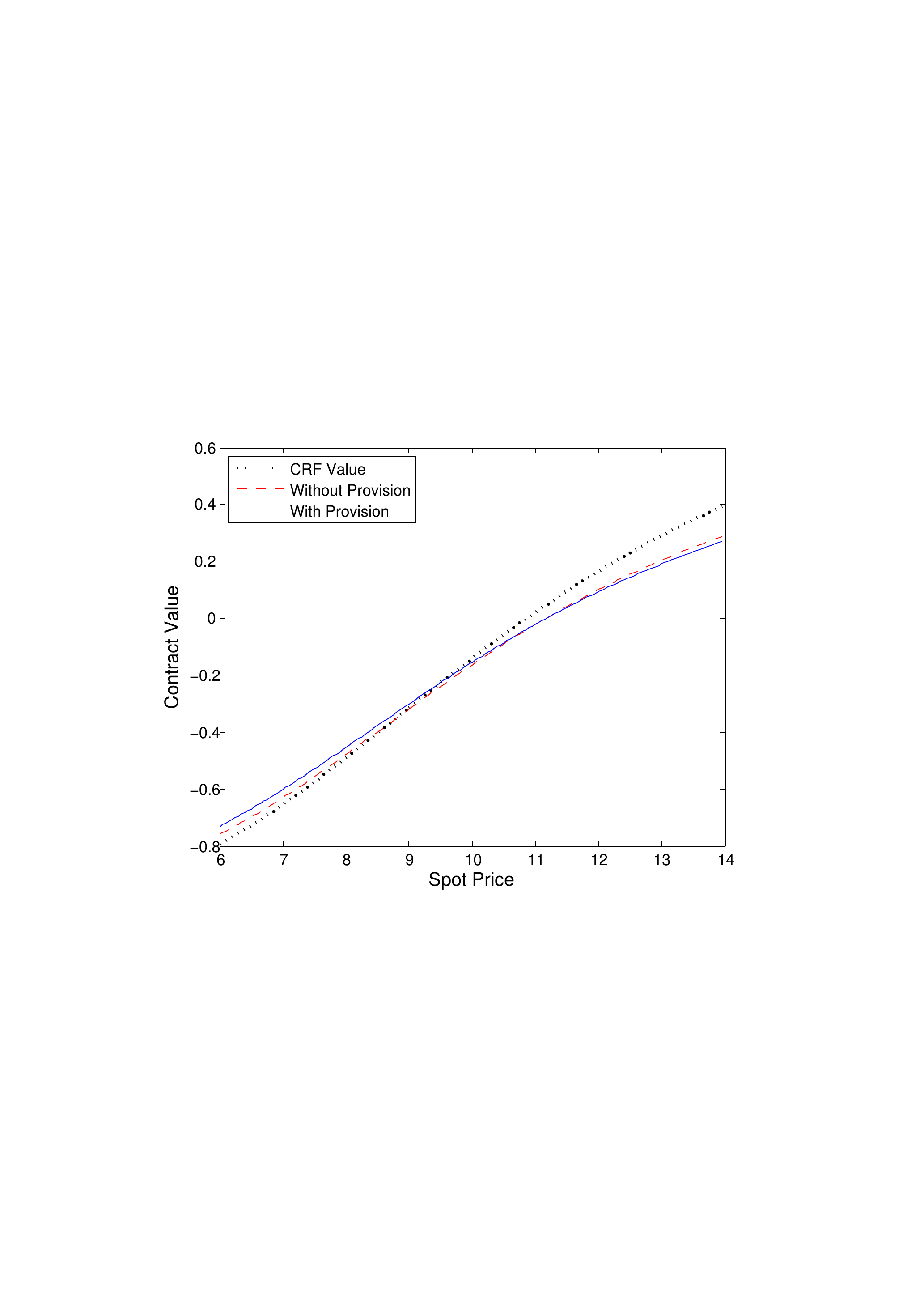}
	 \caption{\small{(Left) Convergence of the MtM values with provision of a call spread $P(0,s)$  for  $s \in [10,18]$. (Right) Comparison of the three MtM values of  a call spread $\{\Pi(0,s), \hP(0,s),P(0,s)\}$ over the spot price. Parameters are given in Table \ref{table:convergence_digital}.}}
    \label{BCVA_Digital}
\end{figure}

In Figure \ref{BCVA_Digital} (right),  we plot three different values $\Pi(0,s)$, $\hP(0,s)$ and $P(0,s)$ for $0 \le s \le \bar{S}$. As we can see, the ordering of these three values can change completely depending on the spot price. For example, for large spot prices, we observe that the CRF value dominates the other two MtM values, but it is lowest when the spot price is small. Furthermore, the value without provision dominates the value with provision for high spot prices, and the opposite holds true for low spot prices.  

Next, we look at the sensitivity of the MtM values with respect to the  counterparty's or own default risk, collateralization ratio and effective collateral rate. In Figure \ref{BCVA_Digital_CounterpartyRisk_Default_Rate}, the MtM values are decreasing in the counterparty default rate (left) and   increasing in the participant's own default rate (right), as is intuitive. Note that the MtM value with provision moves more rapidly with respect to the counterparty default rate, but the MtM value without provision is more sensitive in the participant's own default rate.

In Figure \ref{BCVA_Digital_Collateral_Funding} (left), an increase in  $\delta_{2}$ reduces counterparty-risk exposure, and therefore, increases the MtM values with and without counterparty risk provision. The rate of increase in contract value  slows down when the collateralization ratio exceed 1. In the over-collateralized range $[1, 1.2]$, party 1 is no longer exposed to the counterparty's default risk. The increase in the contact value (from party 1's perspective) results from  the possibility of collecting  the excess collateral upon party 1's own  default.

In practice, if the participant's funding cost rate is high, the effective collateral rate can be negative (see \cite{burgard2011partial}). This implies a net interest payment by the participant for the long position due to collateralization. As the effective collateral rate becomes more negative, the contract values with and without provision decrease as we observe on the right panel of Figure \ref{BCVA_Digital_Collateral_Funding}.

\begin{figure}[h]
    \centering
    \includegraphics[width=0.45\textwidth]{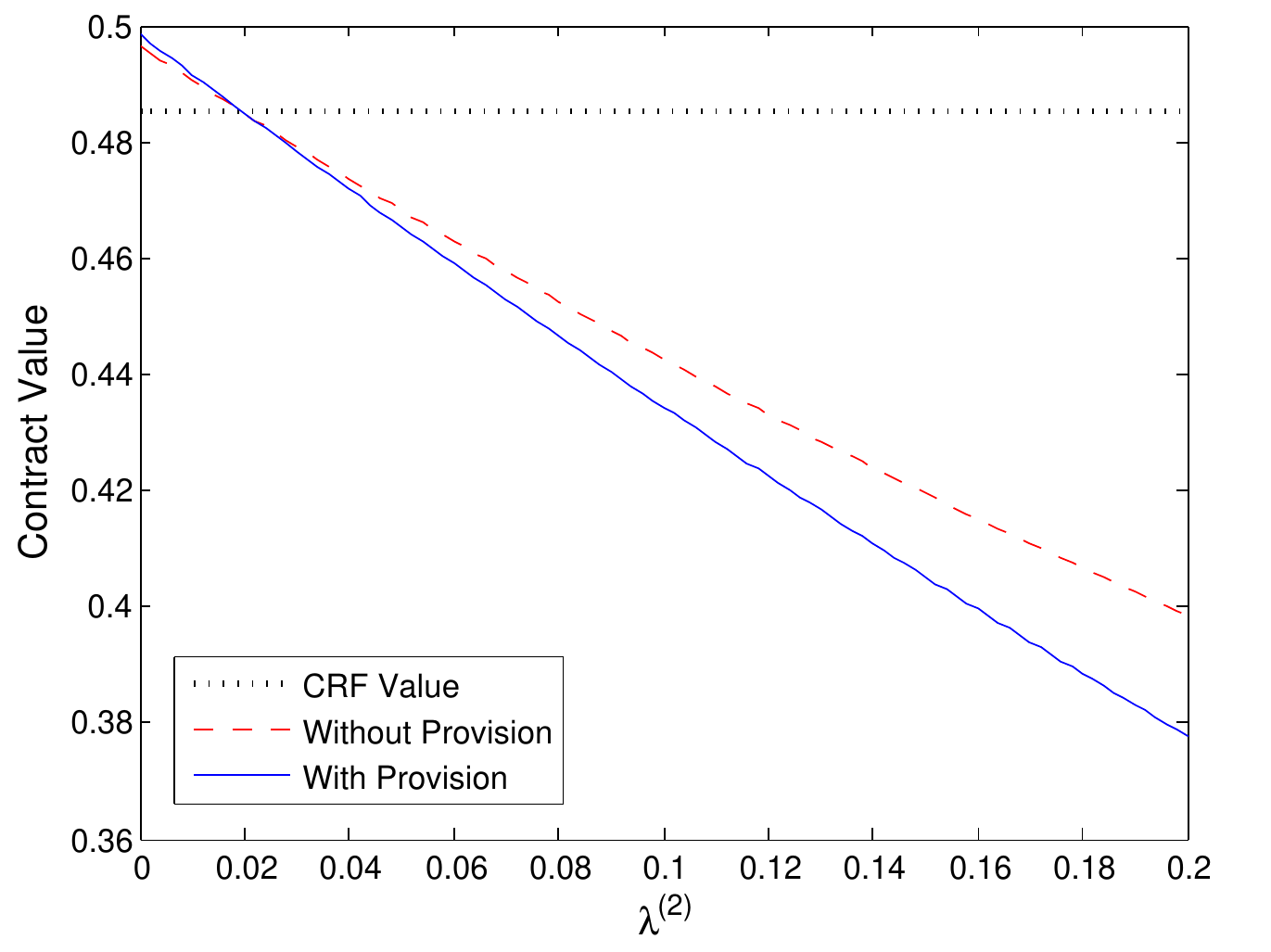}
        \includegraphics[width=0.45\textwidth]{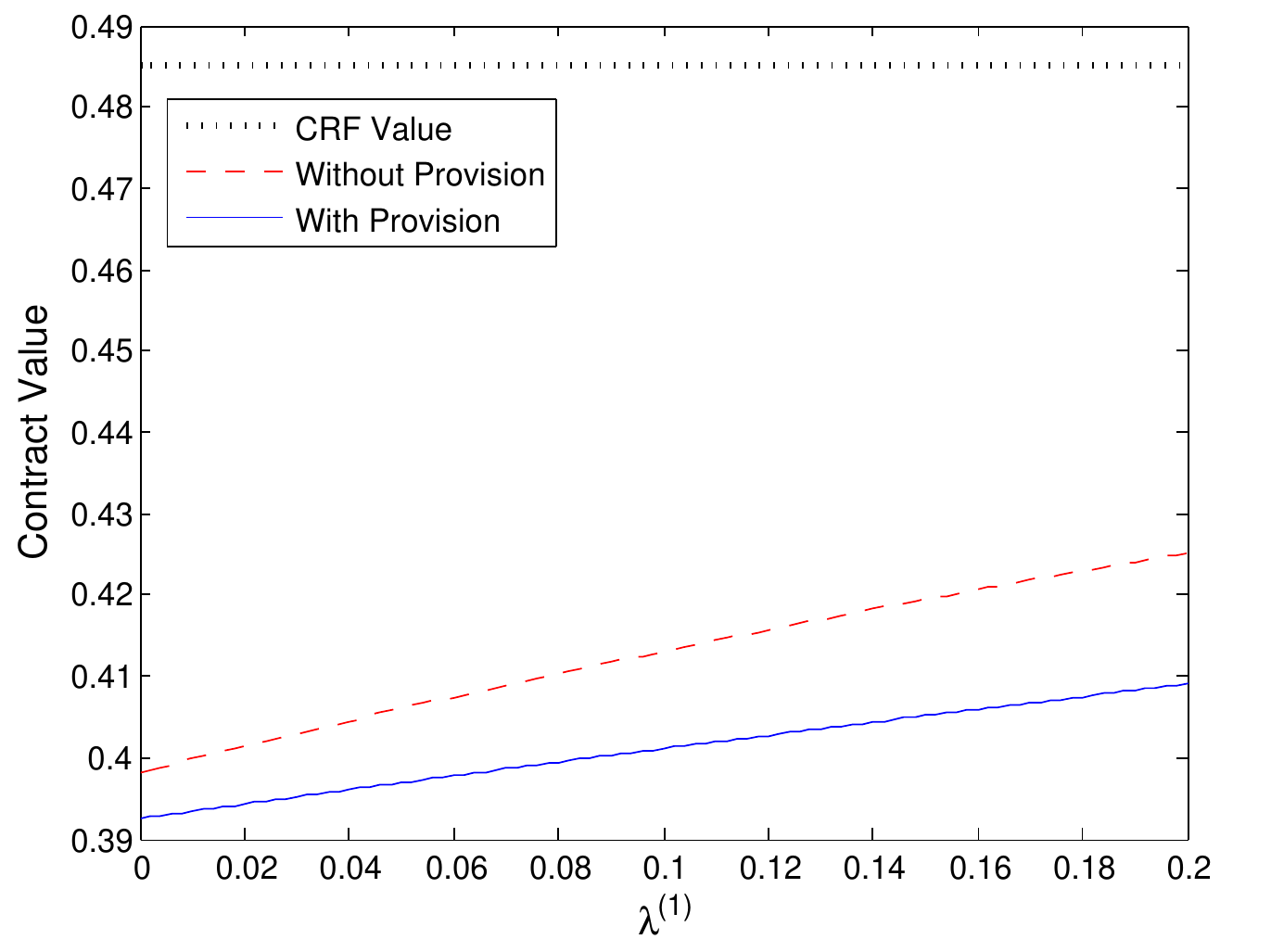}
	 \caption{\small{(Left) The MtM values with and without provision are decreasing in the counterparty default rate $\lambda^{(2)}$ with  $\lambda^{(1)} = 15\%$. (Right) The MtM values are increasing in $\lambda^{(1)}$ with $\lambda^{(2)} = 15\%$. The CRF value stays  constant as  $\lambda^{(2)}$ or  $\lambda^{(1)}$ varies. Parameters: $\epsilon_1 = \epsilon_2 = 0.01$, $m_1 = m_2 = 1$, $s = 15$, $K = 10$, $T = 2$, $t=0$,  $r = 2\%$, $\sigma = 25\%$, $\lambda^{(0)} = 3\%$, $R_1 = R_2 = 40\%$, $\delta_{1} = \delta_{2} = 0\%$, $c_{1} = c_{2} = 1\%$, $\bar{\epsilon} = 10^{-5}$, $\Delta S = 0.05$, $\Delta t = 1/250$.}}
    \label{BCVA_Digital_CounterpartyRisk_Default_Rate}
\end{figure}
\begin{figure}[h!]
    \centering
    \includegraphics[width=0.45\textwidth]{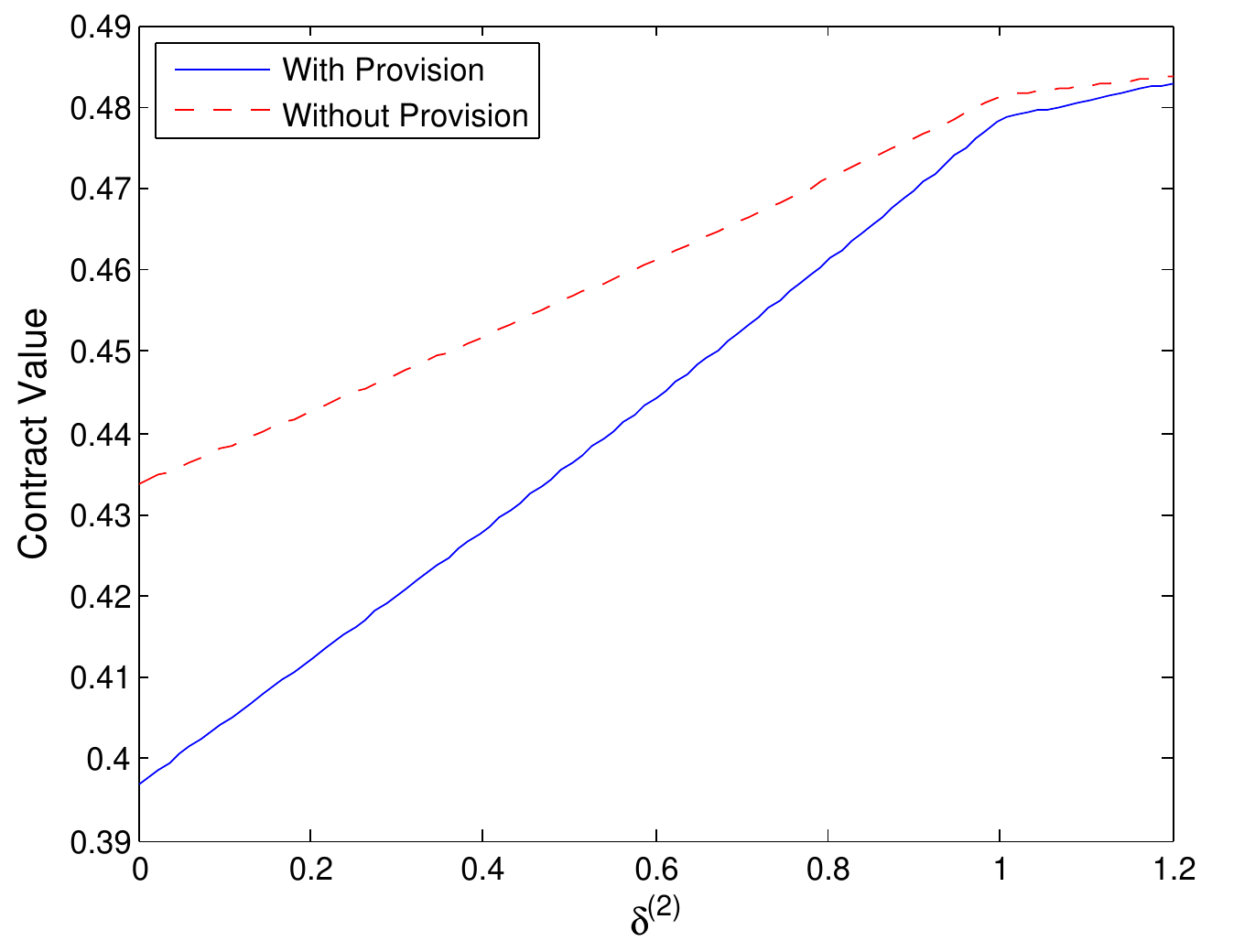}
        \includegraphics[width=0.45\textwidth]{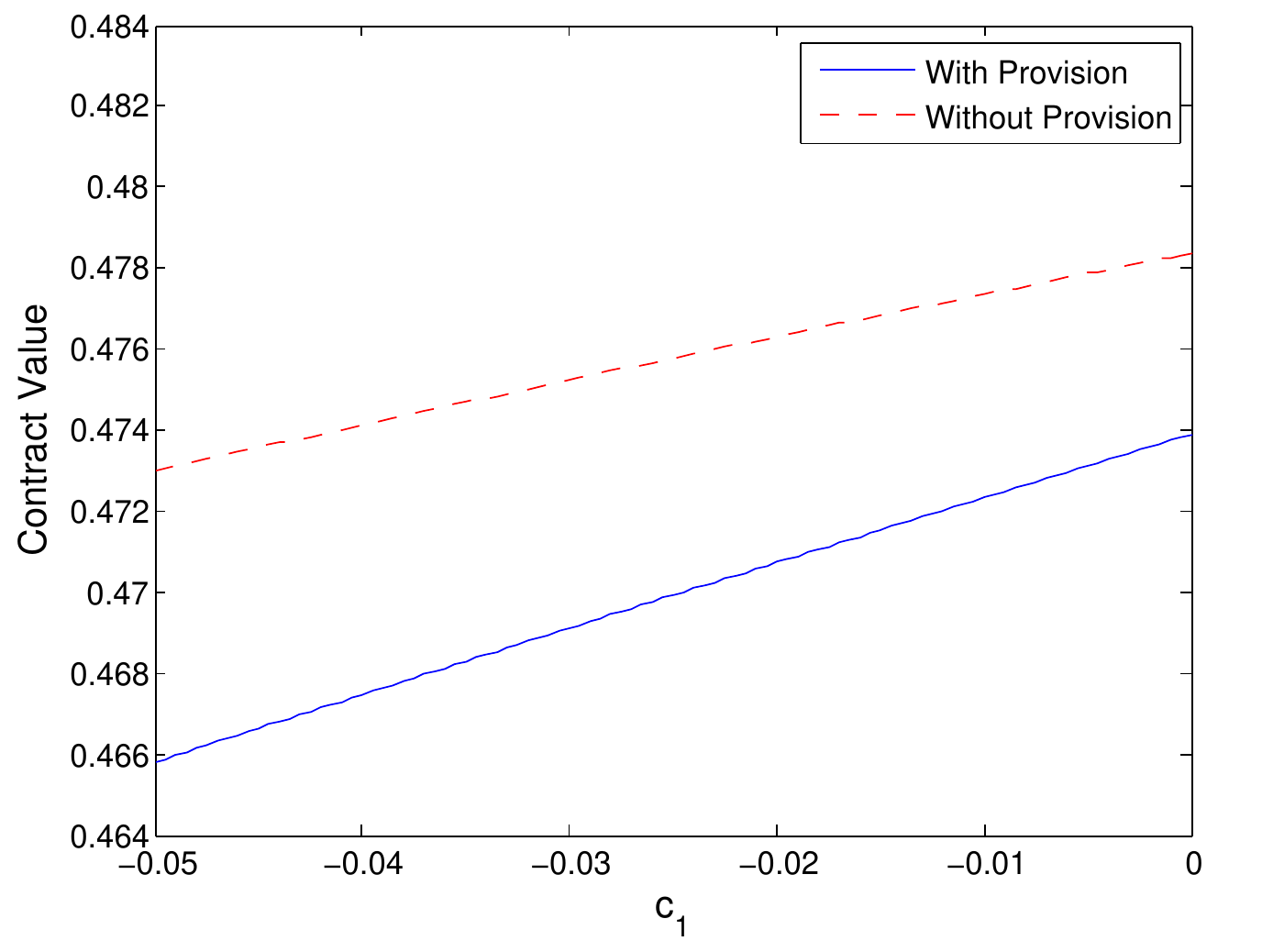}
	 \caption{\small{(Left) The MtM values of a call spread with and without provision are decreasing in counterparty's collateralization ratio $\delta_{2}$. (Right) The MtM values  of a call spread are increasing in the participant's effective collateral rate $c_1 \in [-5\%,0\%]$. Parameters: $\epsilon_1 = \epsilon_2 = 0.01$, $m_1 = m_2 = 1$, $s = 15$, $K = 10$, $T = 2$, $t=0$,  $r = 2\%$, $\sigma = 25\%$, $\lambda^{(0)} = 3\%$, $\lambda^{(1)} = 5\%$, $\lambda^{(2)} = 15\%$, $R_1 = R_2 = 40\%$,  $\delta_{1} \in \{0\%$ (left), $100\%$ (right)$\}$, $\delta_{2} = 100\%$, $c_{1} = c_{2} = 1\%$, $\bar{\epsilon} = 10^{-5}$, $\Delta S = 0.05$, $\Delta t = 1/250$.}}
    \label{BCVA_Digital_Collateral_Funding}
\end{figure}

We illustrate  the bid-ask prices $P^b$ and $P^s$ of a call spread in Figure \ref{Bid-Ask_Stock_Digital}.} On  the left panel where the participant is assumed to be  default-free, we observe the  dominance of the three prices: $P^s \ge \Pi \ge P^b$.  However, in the bilateral counterparty-risk case, the  ordering of prices is different   in in-the-money (ITM) and out-of-money (OTM) ranges. We see that $\Pi \ge P^s \ge P^b$ in the ITM range, but $P^s \ge P^b \ge \Pi$  in  the OTM range.

\begin{figure}[th]
    \centering
    \includegraphics[width=0.45\textwidth]{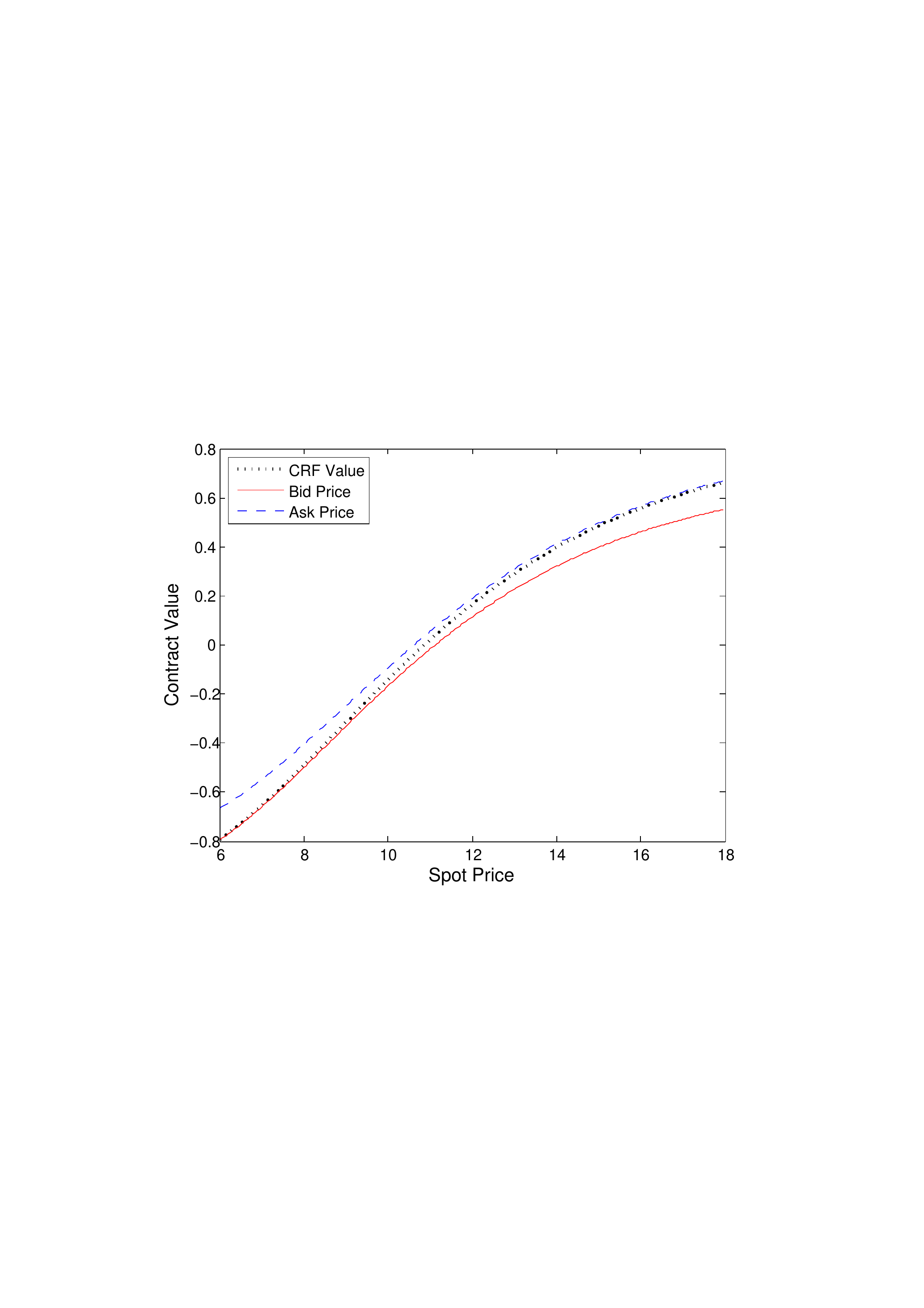}
    \includegraphics[width=0.45\textwidth]{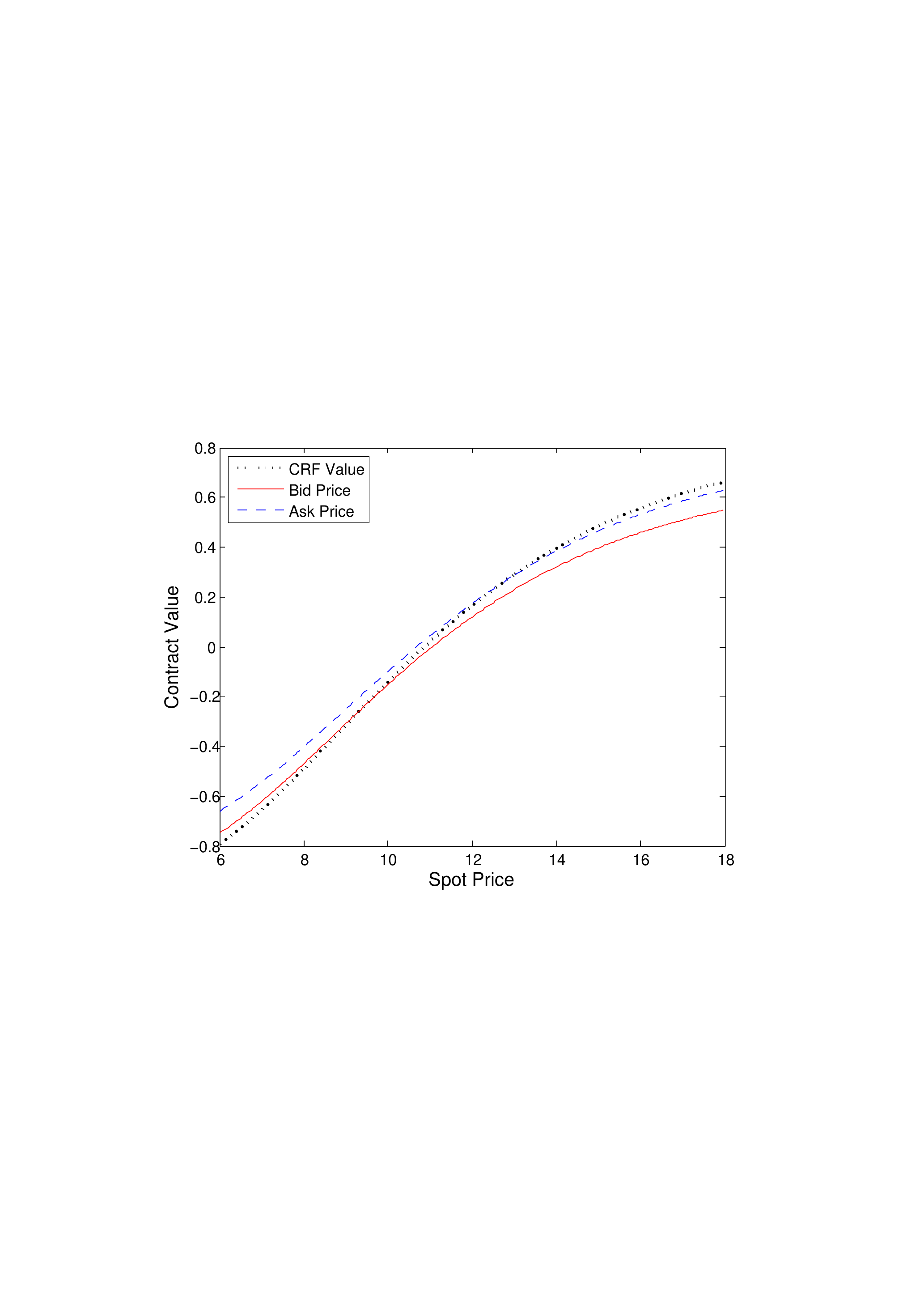}
	 \caption{\small{Call spread bid and ask prices with counterparty risk provision under (left) unilateral counterparty risk $\lambda^{(1)} = 0\%$, $\lambda^{(2)} = 15\%$, and (right) bilateral counterparty risk $\lambda^{(1)} = 5\%$\,, $\lambda^{(2)} = 15\%$. Other parameters are the same as in Figure \ref{BCVA_Digital_CounterpartyRisk_Default_Rate}.}}
    \label{Bid-Ask_Stock_Digital}
\end{figure}

\subsection{Equity Forwards}
Equity  forward contracts are commonly traded in  the OTC  market.  With stock $S$ as the underlying asset, we consider a  forward  with maturity $T$. The initial forward price  $F_0$ is set so that the contract has zero  value at inception. When the underlying stock defaults, the stock price goes to zero, and the buyer has to pay the discounted value  $e^{-r \, (T-\tau_0)} F_0$ at the default time. The contract cash flow is described by the triplet $(g,h,l)= (S_T-F_0,0,-e^{-r \, (T-\tau_0)}F_0)$. As the underlying stock price fluctuates over time, the MtM value also varies.  

The MtM value of a long forward contract $P(t,s)$ with provision (see \eqref{Expectation_With_Digital}) is computed using Algorithm \ref{Algorithm:Fixed_Point}. In Table \ref{table:convergence}, we show the convergence result of the MtM value at time $t =1$ when the stock price $S_1 = 20$, with initial forward price $F_0 = 10$. The first column of each case shows the value of the forward contract for each step $n \in \{1,...,6\}$. The second column of each case shows the error in terms of the supremum norm $\| P^{(n)} - P^{(n-1)}\|_{\infty}$ over  the whole domain $[0,T] \times \D$, and with tolerance $\bar{\epsilon}= 10^{-5}$ the algorithm stops at $n = 6$ in  both cases.  The number of iterations may depend on the initial value $P^{(0)}$, threshold $\bar{\epsilon}$ and upper bound $\bar{S}$. As we observe, for sufficiently large upper bounds $\bar{S} \in \{30, 40\}$, the convergent prices are the same.

\begin{table}[ht]
\begin{small}

  \centering
    \begin{tabular}{c||cc|cc}
	\hline
     $S_1 = 20$        & \multicolumn{2}{c|}{$\bar{S} = 30$}  & \multicolumn{2}{c}{$\bar{S} = 40$}\\
    \hline
          & Value & $\epsilon$  & Value & $\epsilon$ \\
    $n=0$   & 0     & -     & 0    & - \\
    $n=1$   & 8.6900 & 17.5592 & 8.6900 & 26.4284 \\
    $n=2$   & 7.6124 & 2.1391 & 7.6124 &    3.2034 \\
    $n=3$   & 7.6777 & 0.1289 & 7.6777 & 0.1927\\
    $n=4$   & 7.6751 & 0.0052 & 7.6751 & 0.0077\\
    $n=5$   & 7.6752 & 0.0002 & 7.6752 &  0.0002 \\
    $n=6$   & 7.6752 & $< 10^{-5}$ & 7.6752 & $< 10^{-5}$ \\
    \hline
    \end{tabular}%
      \caption{\small{Convergence of the values of a forward contract when spot price $S_1 = 20$ at $t =1$, with maximum stock price  $\bar{S} \in\{ 30$ (left column)$, $40 (right column)$ \}$. Other common parameters: $F_0 = 10$, $T = 3$, $r = 2\%$, $\sigma = 25\%$, $\lambda^{(0)} = 3\%$, $\lambda^{(1)} = 5\%$, $\lambda^{(2)} = 15\%$,  $R_1 = 40\%$, $R_2 = 40\%$, $\delta_{1} = \delta_{2} = 0$, $\bar{\epsilon} = 10^{-5}$, $\Delta S = 0.05$, $\Delta t = 1/500$.}}
\label{table:convergence} \end{small}
\end{table}

At time $t$, when the stock price is $s$, the CRF value of a long forward is given by
\begin{align}
\Pi(t,s) = (s - e^{- r \, (T-t)} F_0)\,. \label{Pi_stock_forward}
\end{align} 
In order to compute the MTM value of a long forward contract $\hP^b$ without counterparty risk provision, we apply \eqref{Pi_stock_forward} to  \eqref{Expectation_Without} and obtain 
\begin{align*}
\hP^b(t,s) &= \Pi(t,s) + \E_{t,s} \left[ \int_t^T e^{-(r+\lambda) (u-t)} \left( (\beta - \alpha) \, \Pi^+(u,S_u)\right) - \beta \, \Pi(u,S_u)  \, du \right] \notag \\
		   &= \Pi(t,s) + \E_{t,s} \left[ \int_t^T e^{-(r+\lambda) (u-t)} \left(  (\beta - \alpha) (S_u - e^{-r (T-u)} F_0)^+ - \beta (S_u - e^{-r (T-u)} F_0) \right)  du \right] \notag \\	
		   &= \Pi(t,s) + \int_t^T e^{-\lambda \, (u-t)} \left( (\beta - \alpha) \E_{t,s} \left[ e^{-r \, (u-t)} (S_u - e^{-r (T-u)} F_0)^+ \right] - \beta (s - e^{-r (T-u)} F_0)  \right) \,du. 
\end{align*}
To simplify the above equation, we notice that
$$\E_{t,s} \left[ e^{-r \, (u-t)} (S_u - e^{-r (T-u)} F_0)^+ \right] = C^{BS} (t, s\, ; T, e^{-r (T-u)} F_0,  r + \lambda^{(0)} , \sigma)\,.$$
We apply similar arguments to the seller's MtM value, and summarize as follows. 
\begin{proposition} The bid-ask prices without counterparty risk provision, $\hP^b(t,s)$ and $\hP^s(t,s)$,  of a stock forward  without counterparty risk provision  are given by
\begin{align*}
&\hP^b(t,s)  = \Pi(t,s) + \int_t^T\! e^{-  \lambda \, (u-t)} \left( (\beta - \alpha) \,  C^{BS} (u, s \,; T, e^{-r (T-u)} F_0,
 r + \lambda^{(0)} , \sigma)  - \beta (s - e^{-r (T-t)} F_0 ) \right)  du,  \\
&\hP^s(t,s)  = \Pi(t,s) + \int_t^T\! e^{-  \lambda \, (u-t)} \left( (\alpha - \beta) \,  C^{BS} (u, s \,; T,  e^{-r (T-u)} F_0,
 r + \lambda^{(0)} , \sigma)  - \alpha (s - e^{-r (T-t)} F_0 ) \right)  du,
\end{align*}
with  $\Pi(t,s)$ satisfying \eqref{Pi_stock_forward}.
\end{proposition}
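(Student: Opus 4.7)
The plan is to substitute the explicit CRF expression $\Pi(u, S_u) = S_u - e^{-r(T-u)} F_0$ directly into the defining relations \eqref{Expectation_Without} for $\hP^b$ and \eqref{Expectation_Without_seller} for $\hP^s$, and then evaluate the resulting expectations in closed form using the lognormal structure of $S$. For a forward, $h \equiv 0$, and since $\alpha, \beta, \lambda^{(i)}, r$ are all constant, the effective discount simplifies to the deterministic $e^{-(r+\lambda)(u-t)}$, while pre-default $S$ is a geometric Brownian motion with drift $r+\lambda^{(0)}$ under $\Q$.

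The first step I would take is to split off the ``CRF contribution'' from the driver via a Feynman--Kac observation. Writing
\begin{align*}
f(u, s, \Pi) = \bigl[\lambda^{(0)}\, l(u) + (\lambda^{(1)} + \lambda^{(2)})\, \Pi\bigr] - \beta\, \Pi + (\beta - \alpha)\,\Pi^+,
\end{align*}
I would argue that feeding only the bracketed quantity (together with the terminal payoff $g(S_T)=S_T - F_0$) through \eqref{Expectation_Without} reproduces $\Pi(t,s)$ exactly. The reason is that $\Pi$ already satisfies a linear PDE with discount $r+\lambda^{(0)}$, and the extra $(\lambda^{(1)}+\lambda^{(2)})\Pi$ term in the driver is precisely what converts that PDE into its counterpart with the higher discount $r+\lambda$. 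Consequently the claim reduces to evaluating the pure correction
\begin{align*}
\hP^b(t,s) - \Pi(t,s) = \E_{t,s}\!\left[\int_t^T e^{-(r+\lambda)(u-t)}\Bigl\{-\beta\,(S_u - e^{-r(T-u)}F_0) + (\beta - \alpha)\,(S_u - e^{-r(T-u)}F_0)^+\Bigr\}\, du\right].
\end{align*}

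By Fubini the two remaining expectations can be computed pointwise in $u$. The linear piece collapses to a deterministic integrand via the drift identity $\E_{t,s}[e^{-r(u-t)} S_u] = s\, e^{\lambda^{(0)}(u-t)}$ and contributes the $-\beta$-weighted deterministic term in the statement. The convex piece $\E_{t,s}[e^{-r(u-t)}(S_u - e^{-r(T-u)}F_0)^+]$ is a standard lognormal call-option integral: it coincides with the Black--Scholes call value on an underlying with drift $r+\lambda^{(0)}$, volatility $\sigma$, spot $s$, strike $e^{-r(T-u)}F_0$ and maturity $T$, i.e., $C^{BS}(u, s; T, e^{-r(T-u)}F_0, r+\lambda^{(0)}, \sigma)$. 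Assembling the two contributions yields the stated formula for $\hP^b$. The seller's price $\hP^s$ then follows by the same argument applied to $\tilde f$ in \eqref{def:f_tilde}: since $y^- = y^+ - y$, the only effect is to interchange the roles of $\alpha$ and $\beta$ in the correction integrand.

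The main obstacle I expect is the initial Feynman--Kac reduction: one has to verify carefully that the mismatch between the CRF discount $r+\lambda^{(0)}$ and the MtM discount $r+\lambda$ is \emph{exactly} absorbed by the $(\lambda^{(1)}+\lambda^{(2)})\Pi$ contribution in the driver, so that the residual really is the pure $-\beta\Pi + (\beta-\alpha)\Pi^+$ correction. Once that reduction is in hand, the rest of the proof amounts to a routine lognormal integration together with a Black--Scholes identification.
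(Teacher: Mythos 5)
Your route is essentially the paper's: the paper likewise substitutes $\Pi(u,S_u)=S_u-e^{-r(T-u)}F_0$ into \eqref{Expectation_Without}, absorbs the terminal payoff together with the $\lambda^{(0)}l+(\lambda^{(1)}+\lambda^{(2)})\Pi$ part of the driver back into $\Pi(t,s)$, and then evaluates the residual $-\beta\,\Pi+(\beta-\alpha)\,\Pi^+$ correction by Fubini and a Black--Scholes identification, with the seller's price obtained by swapping $\alpha$ and $\beta$ exactly as you do via $y^-=y^+-y$. You are in fact more explicit than the paper on the one step it glosses over: the paper simply asserts the first equality $\hP^b=\Pi+\E_{t,s}[\int_t^T e^{-(r+\lambda)(u-t)}((\beta-\alpha)\Pi_u^+-\beta\,\Pi_u)\,du]$, whereas you justify it by the discount-shift (Feynman--Kac) observation that the $(\lambda^{(1)}+\lambda^{(2)})\Pi$ term converts the $r+\lambda^{(0)}$-discounted equation for $\Pi$ into its $r+\lambda$-discounted counterpart; that justification is sound.

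One caveat on the final assembly. The drift identity you correctly invoke, $\E_{t,s}[e^{-r(u-t)}S_u]=s\,e^{\lambda^{(0)}(u-t)}$, gives for the linear piece $-\beta\,e^{-\lambda(u-t)}\bigl(s\,e^{\lambda^{(0)}(u-t)}-e^{-r(T-t)}F_0\bigr)$, which is not the $u$-independent term $-\beta\,(s-e^{-r(T-t)}F_0)$ in the statement; likewise $\E_{t,s}[e^{-r(u-t)}(S_u-e^{-r(T-u)}F_0)^+]$ equals $e^{\lambda^{(0)}(u-t)}$ times a Black--Scholes price with time to maturity $u-t$, not the $C^{BS}(u,s\,;T,\cdot)$ of the statement. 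So the identity you state does not literally deliver the displayed formula. This mismatch is inherited from the paper itself, whose intermediate display already drops the $e^{\lambda^{(0)}(u-t)}$ factor (writing $-\beta\,(s-e^{-r(T-u)}F_0)$, which moreover disagrees with the proposition's $-\beta\,(s-e^{-r(T-t)}F_0)$); your proof reproduces the paper's argument faithfully, including this loose end, but as written the last step only closes if one either takes $\lambda^{(0)}=0$ or adjusts the target formula by the missing $e^{\lambda^{(0)}(u-t)}$ factors.
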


The fair forward price makes the MtM value of the contract equal to zero at the start of the contract. The  CRF value in  \eqref{Pi_stock_forward}  implies that  $F_0 = e^{r \, T} S_0$. However, the fair forward price in presence of bilateral counterparty risk is found implicitly.  Precisely, the fair forward price $ F^*_0$ of a long (resp. short) position makes the MtM value with counterparty risk provision satisfies  $P^b(0,S_0; F^b_0) = 0$ (resp.  $P^s(0,S_0; F^s_0)$ = 0).

In Figure \ref{Bid-Ask_Stock_Forward_GBM}, we plot the bid-ask prices $P^b$ and $P^s$ of the forward contract with counterparty risk provision together with the CRF value. On the left panel, all three values increase as the underlying stock price increases. Similar to the call spread case,  the price ordering   changes from  $\Pi \ge P^s \ge P^b$ in  the ITM range to $P^s  \ge P^b \ge \Pi$ in the OTM range. On the right panel, both MtM values decrease significantly as the counterparty default rate increases. However, the MtM with provision moves more rapidly, similar to Figure \ref{BCVA_Digital_CounterpartyRisk_Default_Rate}.

\begin{figure}[htp]
    \centering
    \includegraphics[width=0.45\textwidth]{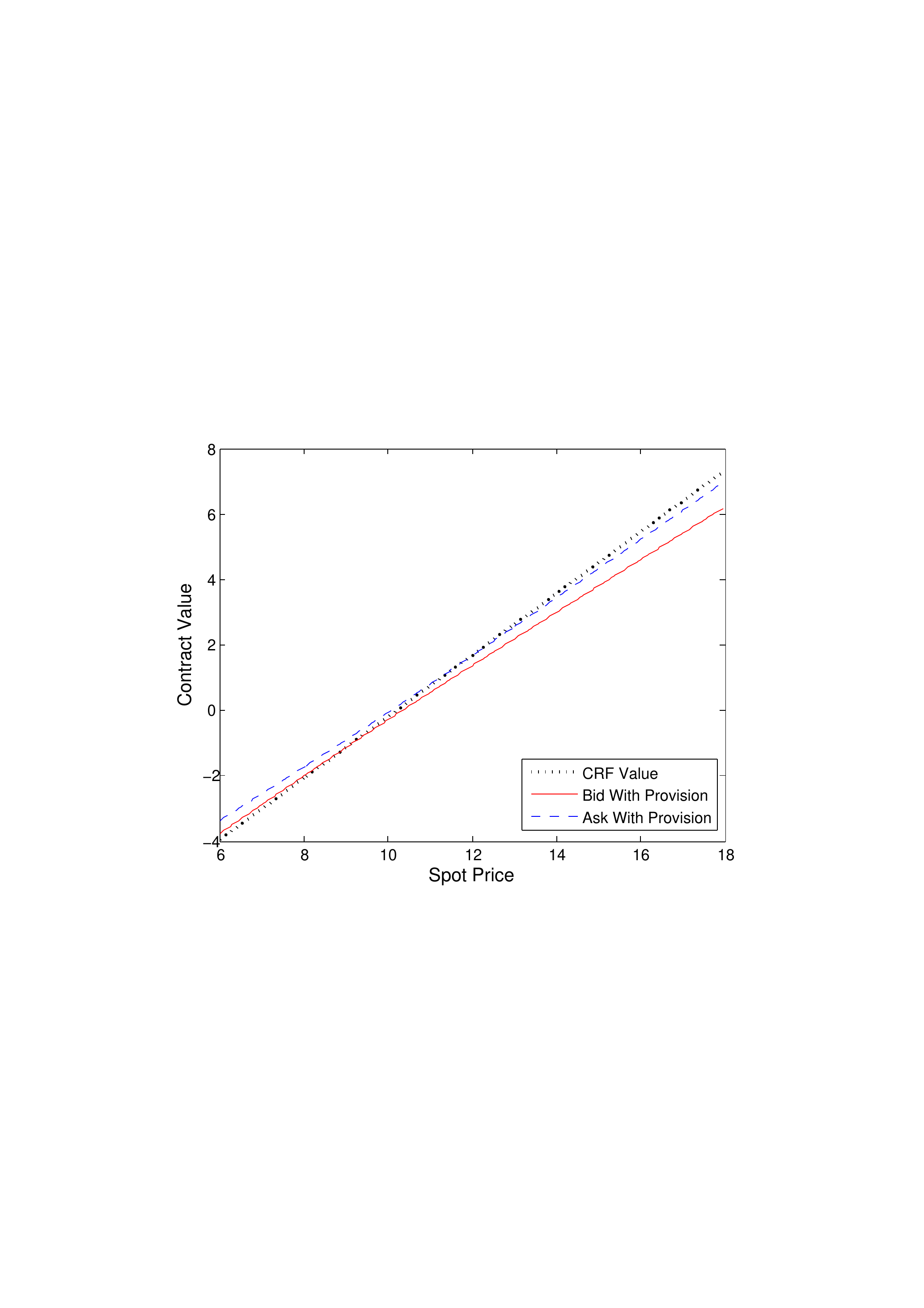}
    \includegraphics[width=0.45\textwidth]{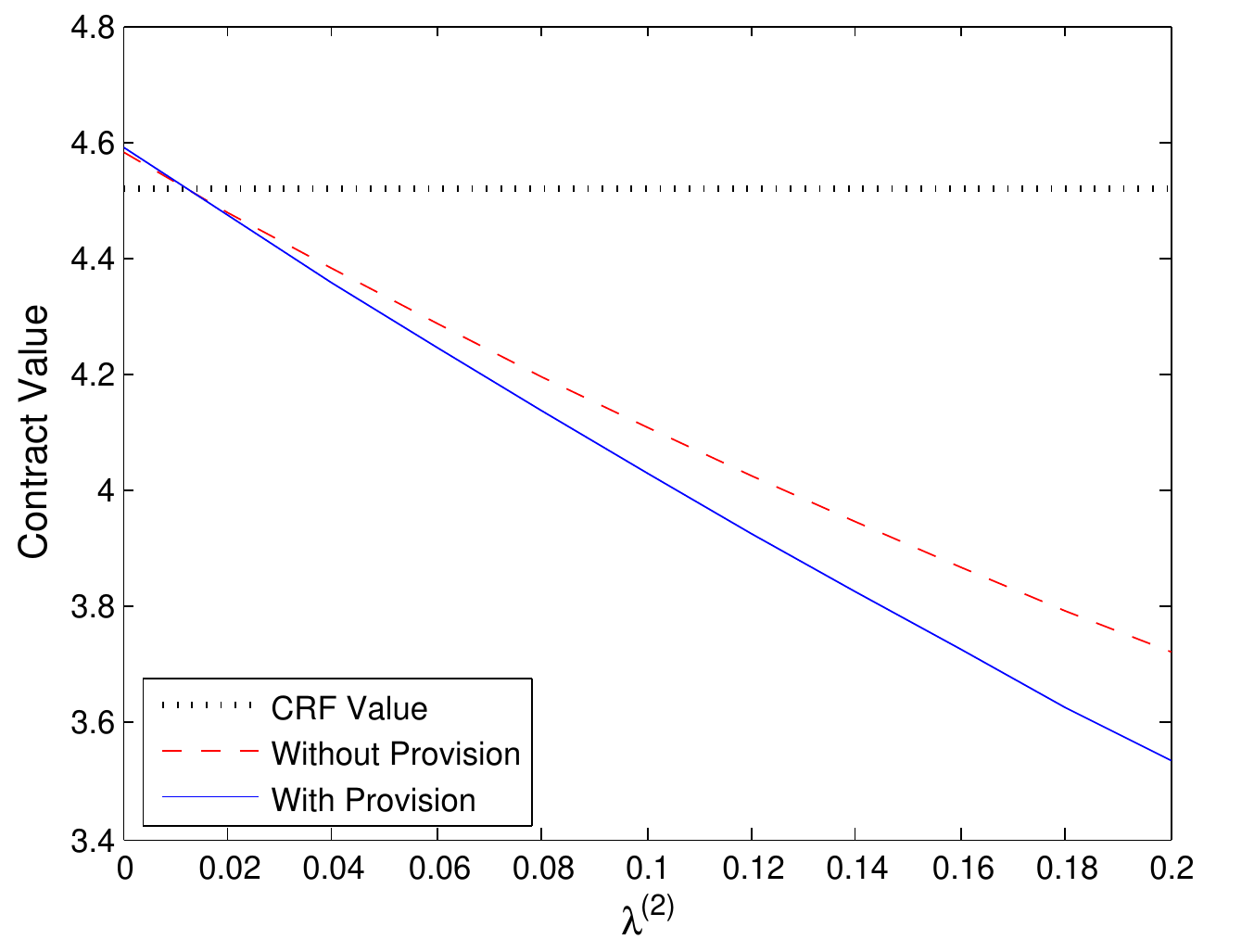}
	 \caption{\small{(Left) The value of a stock forward at time $0$ is increasing in the spot price. The fair forward price is the spot price  at which the contract value is zero. (Right) The MtM forward contract values of a stock forward  are decreasing in the counterparty's default rate $\lambda^{(2)}$, with $F_0 = 10$, $t=1$ and $T=3$.  Parameters: $r = 2\%, \sigma = 25\%$, $\lambda^{(0)} = 3\%$,  $\lambda^{(1)} = 5\%$, $R_1 = 40\%$, $\lambda^{(2)} = 15\%$, $R_2 = 40\%$,  $\delta_1 = \delta_2 = 0$, $\bar{\epsilon} = 10^{-5}$, $\Delta S = 0.05$, $\Delta t = 1/500$.}}
    \label{Bid-Ask_Stock_Forward_GBM}
\end{figure}

\begin{remark}[Total Return Swap]
Total return swaps (TRS) are also traded over the counter as an alternative to  equity stock forwards. The swap buyer will receive the increase in equity value at swap expiration date $T$, $S_T - S_0$. On the other hand, the buyer continuously pays a premium rate $p \ge r$ until the expiration date and also pays the decrease in the equity value at the expiration date to the swap seller. The TRS is represented as the triplet $(S_T-S_0, - p, -S_0 \, e^{-r \, (T - \tau_0)})$.
\end{remark}

\subsection{Claims with Positive Payoffs}
Some derivatives have positive payoffs, i.e. the triplet $g,h,l \ge 0$, including calls, puts and digital options. For both conventions, the nonlinear property of the price function disappears since we can substitute the nonlinear functions  $P^+(t,s)$ and $\Pi^+(t,s)$ in \eqref{PDE_With} and \eqref{PDE_Without}  by the linear functions $P(t,s)$ and $\Pi(t,s)$. From this observation, we derive the formulae for the bid-ask prices of derivatives with positive payoffs.
\begin{proposition} \label{Prop:UCVA_Positive_NPV} 
For any claim with $g,h,l \ge 0$, the bid-ask prices of with counterparty risk provision satisfy
\begin{align}
	 P^b(t,s) &= \E_{t,s} \bigg[ e^{- (r + \alpha + \lambda^{(0)}) \, (T-t)} \, g(S_T) +  \int_t^T  e^{-  (r + \alpha + \lambda^{(0)}) \, (u-t)} (h(S_u) + \lambda^{(0)} \, l(u,S_u)) \, du \bigg], \label{with_option_buyer}\\
	P^s(t,s) &= \E_{t,s} \bigg[ e^{-  (r + \beta + \lambda^{(0)}) \, (T-t)} \, g(S_T) + \int_t^T  e^{- (r + \beta + \lambda^{(0)}) \, (u-t)} (h(S_u) + \lambda^{(0)} \, l(u,S_u)) \, du \bigg].  \label{with_option_seller}
\end{align}
Without counterparty risk provision (see Remark \ref{remark:without}), the bid-ask prices satisfy
\begin{align}
 \hP^b(t,s) &= \Pi(t,s) - \E_{t,s}\left[ \int_t^T \alpha  \,  e^{-  (r + \lambda) \, (u-t)} \,  \Pi(u,S_u) \, du \right], \label{without_option_buyer}\\
 \hP^s(t,s) &=  \Pi(t,s) - \E_{t,s}\left[ \int_t^T \beta  \,  e^{-  (r + \lambda) \, (u-t)} \,  \Pi(u,S_u) \, du \right]. \label{without_option_seller}
\end{align}
\end{proposition}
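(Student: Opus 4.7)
The plan is to exploit the assumption $g, h, l \ge 0$ to eliminate the nonlinearity in $f$ and $\tilde{f}$, reducing both pricing equations to linear Feynman--Kac problems. The key observation is that since $\Pi$ in \eqref{Expectation_Pi} is a conditional expectation of a nonnegative integrand, $\Pi \ge 0$, and the candidate expressions on the RHSs of \eqref{with_option_buyer}--\eqref{without_option_seller} are evidently nonnegative, so that $y^+ = y$ and $y^- = 0$ whenever these candidates are substituted into $f$ or $\tilde{f}$.

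For \eqref{with_option_buyer}, I will define $V^b$ to be the RHS of the formula and show $V^b = P^b$ via fixed-point uniqueness. Since $V^b \ge 0$, Feynman--Kac shows $V^b$ solves the linear PDE $V^b_t + \mathcal{L} V^b - (r + \lambda^{(0)} + \alpha) V^b + h + \lambda^{(0)} l = 0$ with terminal condition $g$. Splitting the zeroth-order term as $-(r + \lambda^{(0)} + \alpha) = -\tilde{r} + (\lambda^{(1)} + \lambda^{(2)} - \alpha)$ and using $(V^b)^+ = V^b$ (so that $(\lambda^{(1)} + \lambda^{(2)} - \alpha) V^b = (\lambda^{(1)} + \lambda^{(2)} - \beta) V^b + (\beta - \alpha) (V^b)^+$), the PDE rewrites as $V^b_t + \mathcal{L} V^b - \tilde{r} V^b + f(t, s, V^b) = 0$. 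Applying Feynman--Kac in reverse gives $V^b = \mathcal{M} V^b$, and Proposition \ref{prop:contraction_mapping} forces $V^b = P = P^b$. The seller's formula \eqref{with_option_seller} follows identically, with $\beta$ replacing $\alpha$ due to the $-(\beta - \alpha) y^-$ structure of $\tilde{f}$ together with the vanishing of $(P^s)^-$.

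For \eqref{without_option_buyer} and \eqref{without_option_seller}, the price functions are already given explicitly in terms of $\Pi$ by \eqref{Expectation_Without} and \eqref{Expectation_Without_seller}, so no fixed-point step is needed. Substituting $\Pi \ge 0$ into $f(\cdot, \Pi)$ (resp.\ $\tilde{f}(\cdot, \Pi)$) produces a linear integrand, but the discount rate in the expectation is $r + \lambda$ while the discount rate implicit in $\Pi$'s definition is $r + \lambda^{(0)}$. To reconcile these, I will apply Itô's formula to $e^{-(r+\lambda)(u-t)} \Pi(u, S_u, X_u)$ and use that $\Pi$ has drift $(r + \lambda^{(0)}) \Pi - (h + \lambda^{(0)} l)$ along $(S, X)$ (read off from the martingale characterization of \eqref{Expectation_Pi}). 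Integrating from $t$ to $T$, taking expectations, and invoking $\Pi(T) = g$ rewrites the first two terms of $\hP^b$ as $\Pi(t, s) - \E_{t,s}[\int_t^T e^{-(r+\lambda)(u-t)} (\lambda^{(1)} + \lambda^{(2)}) \Pi \, du]$; combined with the remaining $(\lambda^{(1)} + \lambda^{(2)} - \alpha) \Pi$ term from $f(\cdot, \Pi)$, this cancels down to exactly $-\alpha \Pi$. The seller's version is identical with $\alpha$ replaced by $\beta$.

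The main obstacle is the bookkeeping in the third paragraph: the discrepancy between the discount rates $r + \lambda$ and $r + \lambda^{(0)}$ is precisely what produces the $(\lambda^{(1)} + \lambda^{(2)}) \Pi$ correction that drives the cancellation, and it must be extracted cleanly via Itô. The nonnegativity of $V^b$, $V^s$, and $\Pi$ is immediate from the integrand structure, and the fixed-point uniqueness of Proposition \ref{prop:contraction_mapping} cleanly handles the with-provision case without any compactness or monotonicity argument.
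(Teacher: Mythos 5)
Your proof is correct, and its skeleton is the same as the paper's: use $g,h,l\ge 0$ to replace $y^+$ by $y$ (resp.\ $y^-$ by $0$) in $f$ (resp.\ $\tilde f$), so that the with-provision prices become fixed points of a linearized operator whose closed form is pinned down by the uniqueness in Proposition \ref{prop:contraction_mapping}, while the without-provision prices reduce to reconciling the discount rate $r+\lambda$ appearing in \eqref{Expectation_Without} with the rate $r+\lambda^{(0)}$ implicit in $\Pi$. Where you differ is in the verification machinery. For \eqref{with_option_buyer}--\eqref{with_option_seller} the paper substitutes the candidate directly into the recursive expectation \eqref{Expectation_With} and collapses the resulting double integral by Fubini and the tower property; you instead pass through the linear Feynman--Kac PDE and back, which is equivalent but quietly requires $C^{1,2}$ regularity of the candidate (unproblematic here, since Section 4 works with constant coefficients and geometric Brownian motion, and regularity is supplied in general by (C1)--(C7)). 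Likewise, for \eqref{without_option_buyer}--\eqref{without_option_seller} the paper again computes by iterated expectations, whereas you apply It\^o's formula to $e^{-(r+\lambda)(u-t)}\Pi(u,S_u)$ and read the $(\lambda^{(1)}+\lambda^{(2)})\,\Pi$ correction off the drift; both routes yield the same cancellation down to $-\alpha\,\Pi$ (resp.\ $-\beta\,\Pi$). One small inaccuracy that does not affect the argument: the right-hand sides of \eqref{without_option_buyer}--\eqref{without_option_seller} are not ``evidently nonnegative'' (they can be negative when $\alpha$ or $\beta$ is large, cf.\ the extra hypotheses needed in Proposition \ref{prop:bid_ask_dominance}); fortunately only $\Pi\ge 0$ is needed there, since it is $\Pi$, not $\hP^b$ or $\hP^s$, that is fed into $f$ and $\tilde f$.
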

We apply these results to call options on a defaultable stock.

\begin{example} The bid price of a European call option with provision is given by
 \begin{align}
 P^{b}(t,s) &= e^{ -\alpha (T-t)} \, C^{BS}(t, s \,; T, K,r + \lambda^{(0)},\sigma)\,. \label{call_with}
\end{align} 
Moreover, the bid price of a European call option without provision is given by
\begin{align}
 \hP^{b}(t,s) &= \left(1 - \frac{\alpha}{\lambda^{(1)} + \lambda^{(2)}} \left( 1- e^{- (\lambda^{(1)} + \lambda^{(2)}) \, (T-t)}\right) \right) \, C^{BS}(t, s \,;T , K,r + \lambda^{(0)},\sigma)\,. \label{call_without}
\end{align}
\end{example}
 In both \eqref{call_with} and \eqref{call_without}, the bid prices have two components: the CRF value of the call option, i.e. Black-Scholes price, and a multiplier that depends on the parameter $\alpha$ (see \eqref{alpha_constant}). First, suppose that $\alpha = 0$. This happens, for example, when the contract is perfectly collateralized and effective collateral rate is zero, i.e. $\delta_2 = c_2 = 0$. In this case, both bid prices are identical to the CRF value $\Pi(t,s) = C^{BS}(t,s\,;T,K,r+\lambda^{(0)},\sigma)$. The parameter $\alpha$ is positive  when the contract is under-collateralized by party $2$, i.e. $\delta_2 < 1$. In this case, the call option buyer is exposed to counterparty default risk, and consequently, the bid prices are less than the CRF value $\Pi(t,s)$. On the other hand, $\alpha$ becomes negative when the contract is over-collateralized by party $2$ and the collateral rate is negligible, i.e. $\delta_2 > 1$ and $c_2 = 0$. The call buyer receives additional financial benefit from excess collateral since the buyer only returns a fraction $1 - L_1$ of the over-collateralized amount at the buyer's own default. As a result, the bid prices are greater than the CRF value. 

\begin{figure}[htp]
    \centering
        \includegraphics[width=0.45\textwidth]{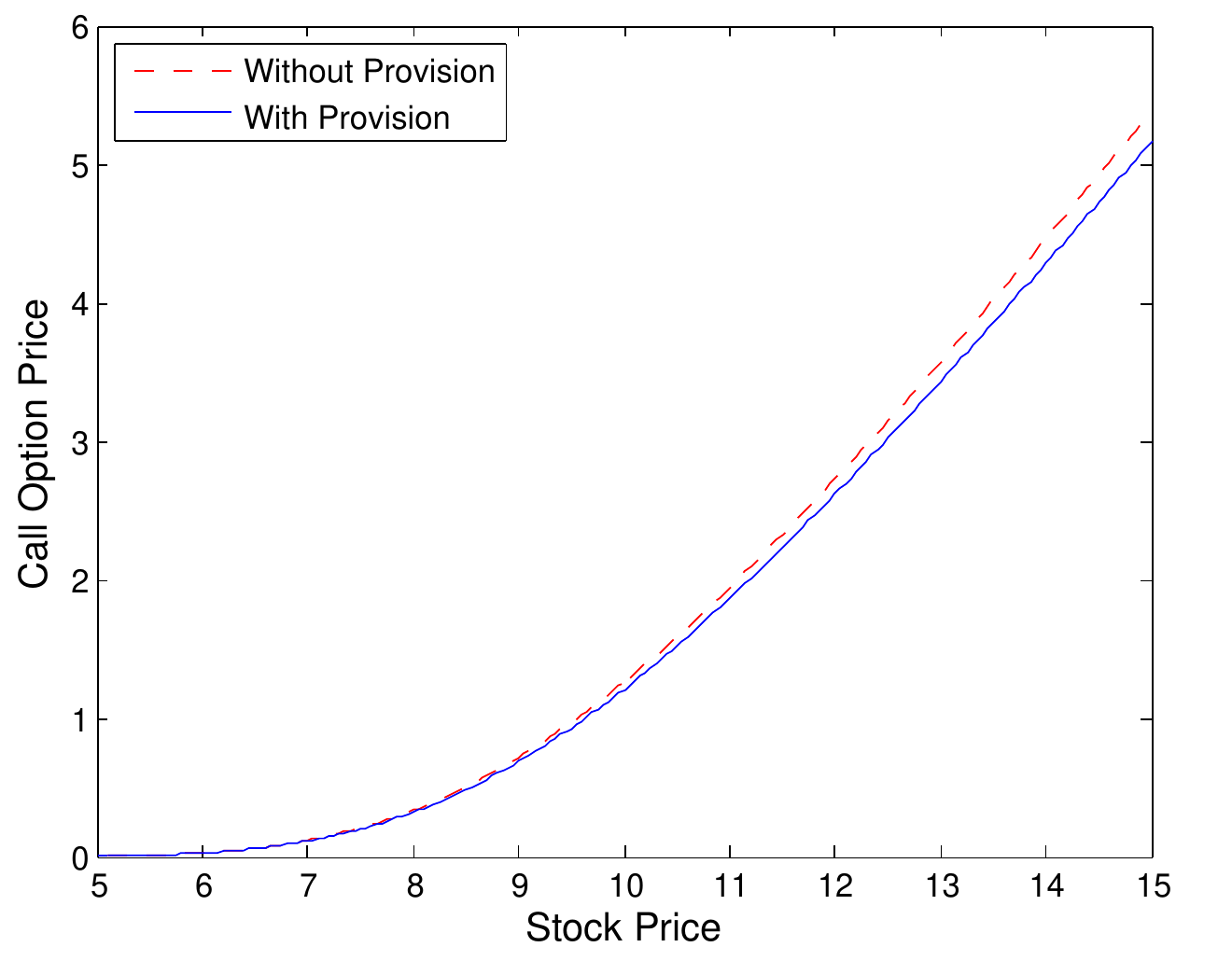}
    \includegraphics[width=0.45\textwidth]{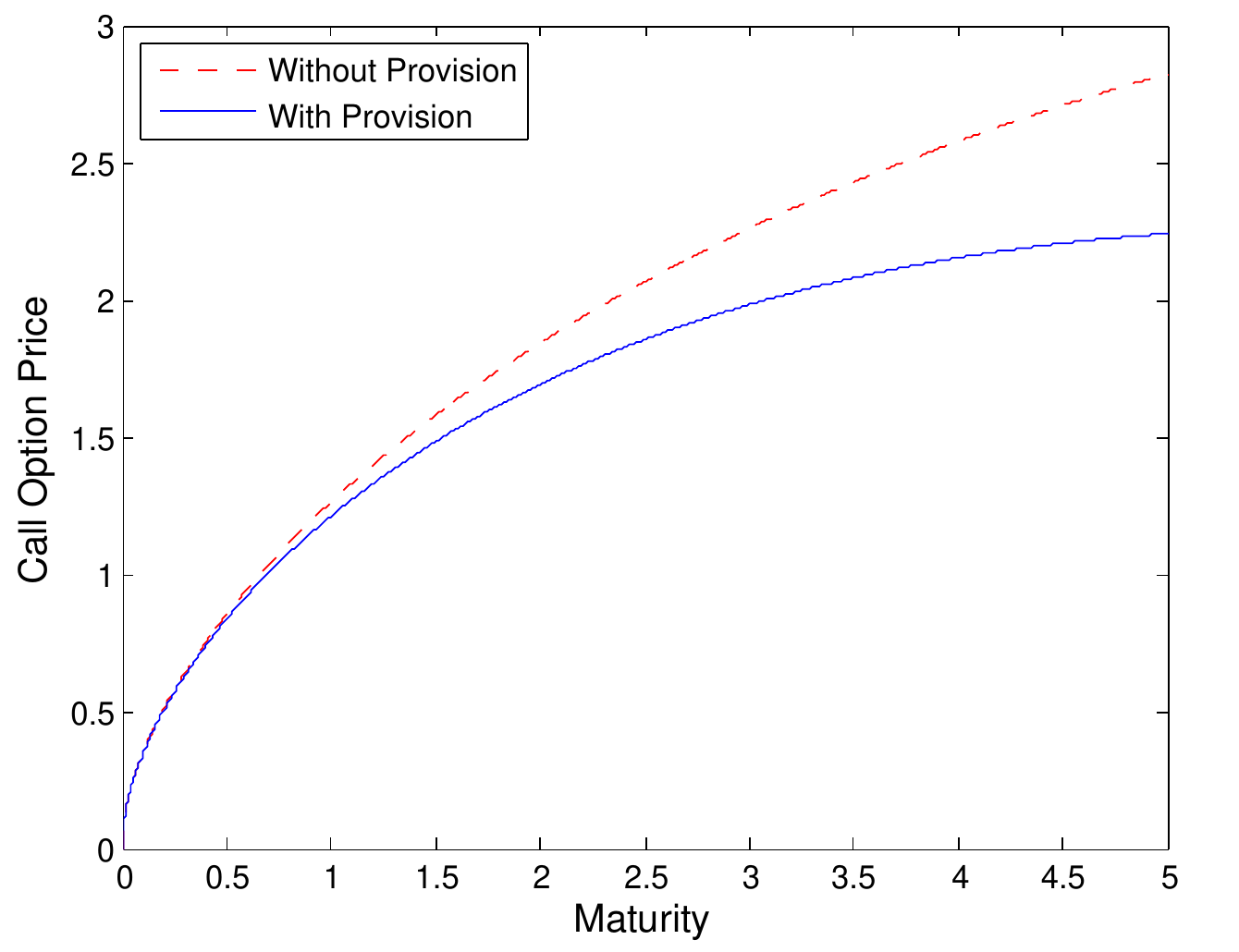}
  	 \caption{\small{The bid prices of a call option with and without counterparty risk provision are increasing in stock price $s \in [5,15]$ (left) and in  maturity $T \in [0,5]$ (right).  Parameters: $\lambda^{(0)} = 5\%,\, \lambda^{(2)} = 10\%, \, S_0 = 10, \, T = 1, \, t=0,\,  r = 2\%, \, \sigma = 25\%, \, R_1 = R_2 = 40\%$, $K = S_0 = 10$\, and $\delta_1 = \delta_2 = 0.$}}
    \label{CVA_Call_Jump2}
\end{figure}

Figure \ref{CVA_Call_Jump2} illustrates the two bid prices (with and without provision) for various maturities and spot prices. We observe the dominance of the price without provision over the price with provision. Moreover, the difference of two prices increase as maturity or spot price increases. This wider level of difference is attributed to the fact that the difference of counterparty risk exposure increases as the maturity increases. 

Next, we derive a number of price dominance relationships for the bid-ask prices and the CRF value of claims with positive payoffs.
\begin{proposition} \label{prop:bid_ask_dominance} Consider any claim with the triplet $g, h, l\ge 0$. If $\alpha , \beta \ge 0$, then  the following price dominance relationships hold:
\begin{align}
\hP^b(t,s), \hP^s(t,s) \le \Pi(t,s) \quad \text{and} \quad  P^b(t,s) , P^s(t,s) \le  \Pi(t,s)\,. \label{bid_ask_relationship}
\end{align}
Furthermore,  if $\lambda^{(1)} + \lambda^{(2)} \ge \alpha, \beta  \ge 0$, then we have 
\begin{align}
 P^b(t,s) \le \hP^b(t,s) \le \Pi(t,s) \quad \text{and} \quad P^s(t,s) \le \hP^s(t,s) \le \Pi(t,s)\,. \label{bid_ask_relationship2}
\end{align}
\end{proposition}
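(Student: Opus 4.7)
The plan is to combine the explicit representations from Proposition \ref{Prop:UCVA_Positive_NPV} with a short Feynman--Kac style subtraction. The structural fact I would lean on throughout is that under $g,h,l\ge 0$ and $\alpha,\beta\ge 0$, the CRF value $\Pi$ and the bid/ask prices $P^b,P^s$ are all nonnegative -- the former directly from its integral representation and the latter from \eqref{with_option_buyer}--\eqref{with_option_seller}, which are expectations of nonnegative integrands. This nonnegativity is what allows $y^+$ to be replaced by $y$ and $y^-$ by $0$ inside $f$ and $\tilde f$ whenever the relevant argument is one of $\Pi, P^b, P^s$.

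For \eqref{bid_ask_relationship} I read the inequalities directly off the formulas. The bounds $\hat P^b,\hat P^s\le\Pi$ follow from \eqref{without_option_buyer}--\eqref{without_option_seller}, since in each case $\Pi$ is reduced by a nonnegative multiple ($\alpha$ or $\beta$) of a nonnegative integrand. The bounds $P^b,P^s\le\Pi$ follow by comparing \eqref{with_option_buyer}--\eqref{with_option_seller} with the analogous representation of $\Pi$: the integrand $h+\lambda^{(0)} l$ and the terminal payoff $g$ are nonnegative, and the discount factor $e^{-(r+\alpha+\lambda^{(0)})(\cdot)}$ is dominated by $e^{-(r+\lambda^{(0)})(\cdot)}$ whenever $\alpha\ge 0$ (and analogously with $\beta$ for $P^s$).

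For \eqref{bid_ask_relationship2} the right-hand inequality is already part 1, so the task reduces to $P^b\le\hat P^b$. Because $P^b\ge 0$, the fixed-point equation \eqref{Expectation_With} together with $f$ in \eqref{def:ftsxy} collapses to the linear representation
\begin{align*}
P^b(t,s) = \E_{t,s}\bigg[e^{-(r+\lambda)(T-t)} g(S_T) + \int_t^T e^{-(r+\lambda)(u-t)} \Big(h(S_u)+\lambda^{(0)} l(u,S_u) + (\lambda^{(1)}+\lambda^{(2)}-\alpha)\,P^b(u,S_u)\Big)\,du\bigg],
\end{align*}
and since $\Pi\ge 0$ the definition \eqref{Expectation_Without} similarly gives
\begin{align*}
\hat P^b(t,s) = \E_{t,s}\bigg[e^{-(r+\lambda)(T-t)} g(S_T) + \int_t^T e^{-(r+\lambda)(u-t)} \Big(h(S_u)+\lambda^{(0)} l(u,S_u) + (\lambda^{(1)}+\lambda^{(2)}-\alpha)\,\Pi(u,S_u)\Big)\,du\bigg].
\end{align*}
Subtracting term by term yields
\begin{align*}
\hat P^b(t,s) - P^b(t,s) = \E_{t,s}\bigg[\int_t^T e^{-(r+\lambda)(u-t)} (\lambda^{(1)}+\lambda^{(2)}-\alpha)\big(\Pi(u,S_u)-P^b(u,S_u)\big)\,du\bigg],
\end{align*}
which is nonnegative once the hypothesis $\lambda^{(1)}+\lambda^{(2)}\ge\alpha$ is combined with $P^b\le\Pi$ from part 1. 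The seller inequality $P^s\le\hat P^s$ is obtained by the same argument with $\beta$ in place of $\alpha$.

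The only delicate point, and what I expect to be the main obstacle, is avoiding circularity when dropping $y^+$ and $y^-$: the linearization of $f$ and $\tilde f$ needs $P^b, P^s, \Pi \ge 0$, and the nonnegativity of $P^b, P^s$ must be read from the closed-form expressions \eqref{with_option_buyer}--\eqref{with_option_seller} rather than from their implicit recursive definitions. Once this logical order is spelled out, the remainder of the proof is just term-by-term sign comparison.
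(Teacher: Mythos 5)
Your proposal is correct and follows essentially the same route as the paper: part one is read off the closed-form expressions of Proposition \ref{Prop:UCVA_Positive_NPV} via the sign of $\alpha,\beta$ and the discount-factor comparison, and part two uses the linearized (nonnegative-payoff) recursive representations of $P^b$ and $\hP^b$ together with $\lambda^{(1)}+\lambda^{(2)}\ge\alpha,\beta$ and the part-one bound $P^b\le\Pi$; your term-by-term subtraction is just a rephrasing of the paper's one-sided inequality chain. Your explicit remark on avoiding circularity -- deriving $P^b,P^s\ge 0$ from the closed forms rather than the implicit fixed-point definition -- is a point the paper leaves tacit, and is a welcome clarification rather than a deviation.
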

We provide a proof in Appendix \ref{sect-prof46}. In  market conditions  where the effective collateral rates $c_1, c_2$ are much lower than the  counterparty/own default rates $\lambda^{(1)}, \lambda^{(2)}$, the conditions for \eqref{bid_ask_relationship} and \eqref{bid_ask_relationship2} are satisfied. In turn, since the buyer's (resp. seller's) MtM value  with provision assumes  $P^b$ (resp. $P^s$) as the liquidation value, which is lower than   the liquidation value $\Pi$ in the case without provision according to  \eqref{bid_ask_relationship}, this implies  the price dominance relationships in  \eqref{bid_ask_relationship2}, as is presented in Figure \ref{CVA_Call_Jump2}.

\section{Defaultable Fixed-Income Derivatives with Counterparty Risk} \label{sec:BCVA_Credit}
We now consider defaultable fixed-income claims under bilateral counterparty risk setting. We model the stochastic factor $X$ by two affine diffusions: Ornstein-Uhlenbeck (OU) and Cox-Ingersoll-Ross (CIR) processes. Under the OU model, the factor process $X$ follows
\begin{align*}
  dX_t &= \kappa ( \theta - X_t) \, dt + \sigma  \, d\tW_t\,, 
\end{align*}
with positive constant parameters $\kappa, \,\theta, \,\sigma > 0$ that represent the speed of mean reversion, long-term mean, and volatility, respectively. In the CIR model, the factor process $X$ follows
\begin{align*}
  dX_t &= \kappa ( \theta - X_t) \, dt + \sigma \sqrt{X_t} \, d\tW_t\,, 
\end{align*}
where the positive parameters $\{ \kappa, \theta, \sigma\}$ satisfy $\kappa \,, \theta\,, \sigma > 0$ and the Feller condition $2 \kappa \, \theta > \sigma^2$ so that $X$ stays positive at all times a.s.  Both models for $X$ satisfy the corresponding conditions among (C1)-(C7) \citep[Sections 2.1-2.2]{heath2000martingales}, and we will consider here swaps whose payoff are bounded continuous.

A generic fixed-income contract is  described by the triplet $(g(x),h(x),l(t,x))$, with  default times $\{\tau_i\}_{i=0,1,2}$ as in  \eqref{def:default_time_dist} and Markovian  intensities of the form $\lambda^{(i)}_t = \lambda^{(i)}(t,X_t)$. From \eqref{Expectation_With}, the MtM value \textit{with counterparty risk provision} is given by
\begin{align}
	P(t,x) &= \E_{t,x} \bigg[ e^{- \int_t^T (r(u) + \lambda(u,X_u)) \, du} \, g(X_T) +  \int_t^T  e^{- \int_t^u (r(v) + \lambda(v,X_v)) \, dv} f(u,X_u,P_u) \, du 
					  \bigg], \label{Expectation_With_Fixed_Income}
\end{align}
where $f(t,x,y) = h(x) + \lambda^{(0)}(t,x) \, l(t,x) + (\lambda^{(1)}(t,x) + \lambda^{(2)}(t,x) - \beta(t,x)) y + (\beta(t,x) - \alpha(t,x)) \, y^+$. The parameters $\alpha(t,x)$ and $\beta(t,x)$ are defined similarly to the definitions \eqref{alpha} and \eqref{beta}. Noticing that $P(t,x)$ in  \eqref{Expectation_With_Fixed_Income} is a special case of $P(t,s,x)$ in \eqref{Expectation_With}, we use Algorithm \ref{Algorithm:Fixed_Point} to find the MtM value by solving iteratively the following PDE:
\begin{align}
	&\frac{\partial P^{(n)}}{\partial t} + \L_X P^{(n)} - (r(t) + \lambda(t,x)) \, P^{(n)}  + f(t,x,P^{(n-1)}) = 0\,, \label{PDE_With_Credit_Iteration}
\end{align}
for $(t,x) \in [0,T) \times \R$, with $P^{(n)}(T,x) = g(x)$ for $x \in \R$ ($x \in \R_+$ for the CIR case). 
The MtM value without counterparty risk provision $\hP(t,x)$ is similarly obtained by replacing $P(u,X_u)$ with $\Pi(u,X_u)$ in the right hand side of \eqref{Expectation_With_Fixed_Income}.


 For numerical examples, we assume the constant interest rate $r$ and constant default rates $\lambda^{(1)}$ and $\lambda^{(2)}$.  The reference default intensity is modeled by   $\lambda^{(0)}_t= (X_t)^+ \wedge\, \overline{X}$ under the OU or CIR model for $X$. In other words, we  impose an   upper  bound $\overline{X}$ and a lower bound  $\underline{X}=0$ on the reference default intensity, and this allows us to  apply our contraction mapping result in Theorem \ref{prop:PDE_classical_solution}.  For implementation, we apply the FDM and Algorithm \ref{Algorithm:Fixed_Point} in Section \ref{numerical_solution} to \eqref{PDE_With_Credit_Iteration} to obtain the MtM value with counterparty risk provision.
 

\subsection{Credit Default Swaps (CDS)}\label{sect-cds}
After the 2008 financial crisis, CDS contracts are traded in a new standardized way whereby the protection buyer pays at a fixed premium rate, along with    a non-zero (positive/negative) upfront payment at the start of the contract.
A long position of a CDS contract written on the default event $\{\tau_0 \le T\}$ is summarized by the triplet notation $(0,-p,1)$. 


  In Table \ref{table:covergence_CDS}, we show the convergence of the CDS MtM values. The first and  second columns for  each model show the values of the CDS at $x = 2\%$ and $x= 8\%$, respectively. In addition, the third column shows the error in terms of the supremum norm $\epsilon = \| P^{(n)} - P^{(n-1)}\|_{\infty}$ over the entire grid for each step $n= 1, 2, \ldots,  7$. The algorithm stops at $n = 5$ under the OU model, and at $n =7$ under the CIR model with the common tolerance level $\bar{\epsilon} = 10^{-5}$. 

\begin{table}[ht]
\begin{small}
  \centering
   \begin{tabular}{c|ccc|ccc}
    \hline
         & \multicolumn{3}{c|}{OU}      & \multicolumn{3}{c}{CIR} \\
    \hline
    \hline
          & $P^{(n)}(0,2\%)$ & $P^{(n)}(0,10\%)$ & $\epsilon$  & $P^{(n)}(0,2\%)$ & $P^{(n)} (0,10\%)$ & $\epsilon$ \\
    $n=0$   & 0    & 0 & -     & 0    & - \\
    $n=1$   & -0.1326 & 0.05218 & 0.9048 & -0.1130 & 0.1569 & 0.3950 \\
    $n=2$   & -0.1526 & 0.04518 & 0.0992 &  -0.1131 & 0.1176 & 0.0911 \\
    $n=3$   & -0.1515 & 0.04562 & 0.0060 & -0.1130 & 0.1248 & 0.0159 \\
    $n=4$   & -0.1516 & 0.04560& 0.0002 & -0.1130 & 0.1238 & 0.0022 \\
    $n=5$   & -0.1516 & 0.04560& $< 10^{-5}$ & -0.1130 & 0.1239 & 0.0003\\
    $n=6$   & - & -  & - &-0.1130 & 0.1239 & $< 10^{-4}$\\
    $n=7$   & - & -  & - & -0.1130 & 0.1239 & $< 10^{-5}$ \\
    \hline
    \end{tabular}%
      \caption{\small{Covergence of MtM values of a CDS in the OU/CIR model with reference default rate $x = 2\%$ and $x = 10\%$. For the OU model: $\theta = 3\%$, $\sigma = 3.5\%$, $\kappa = 5\%$. For the CIR model: $\theta = 3\%$, $\sigma = 5\%$, $\kappa = 5\%$. Other common Parameters: $T = 5$, $t= 0$, $p = 100bps$, $r = 2\%$,  $\psi_{0} = 0$, $\psi_{1} = 5\%$, $\psi_{2} = 25\%$, $w_0 = 1$, $w_1 = w_2 = 0$, $\lambda^{(1)} = 5\%$, $\lambda^{(2)} = 25\%$, $R_1 = R_2 = 40\%$, $\delta_1 = \delta_2 = 0$, $\bar{\epsilon} = 10^{-5}$, $\overline{X} = 20\%$, $\underline{X} = 0$, $\Delta X = 0.001$, $\Delta t = 1/500$.}}
\label{table:covergence_CDS} 
\end{small}
\end{table} 
\begin{figure}[h!]
    \centering
    \includegraphics[width=0.45\textwidth]{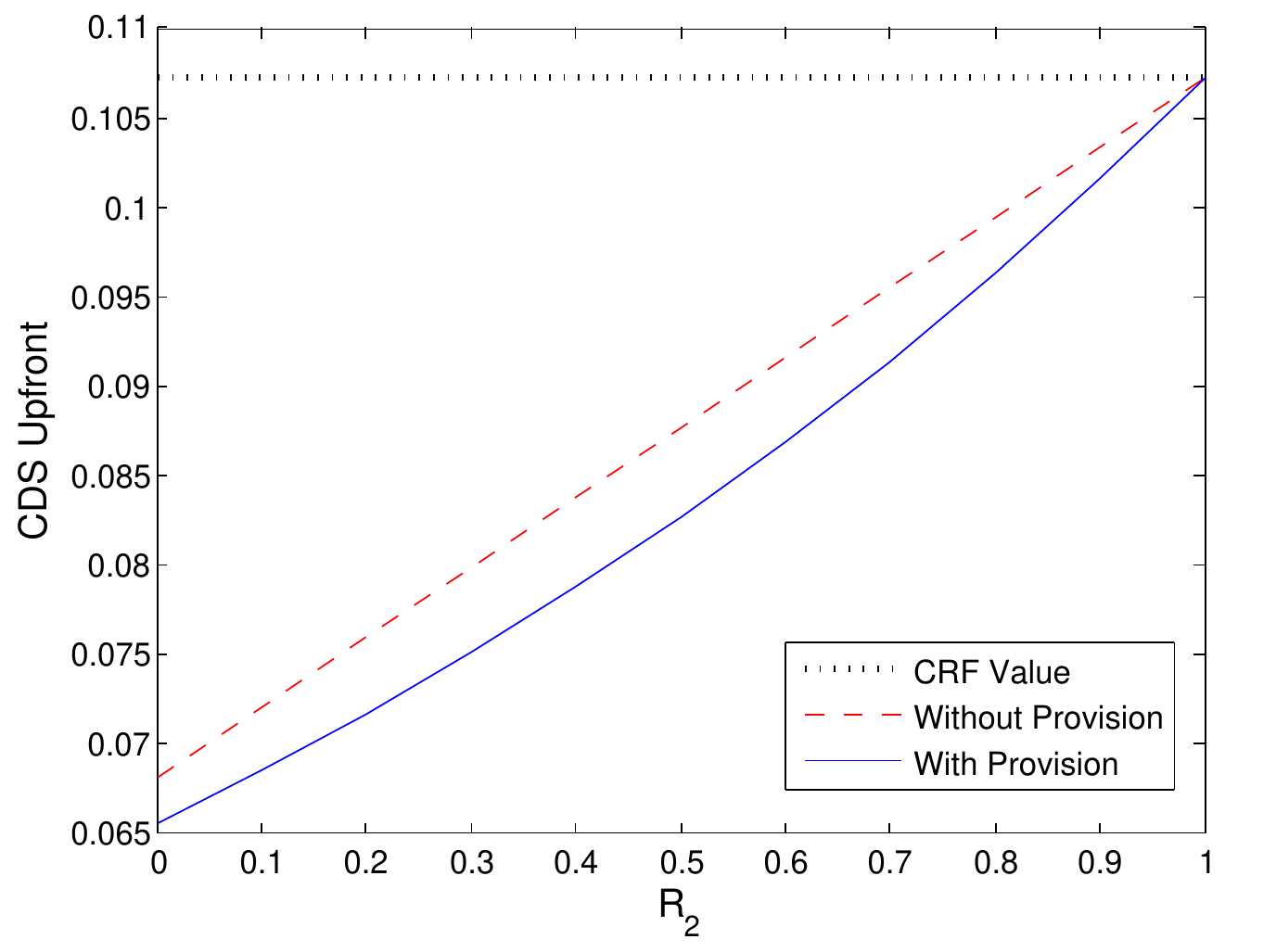}
    \includegraphics[width=0.45\textwidth]{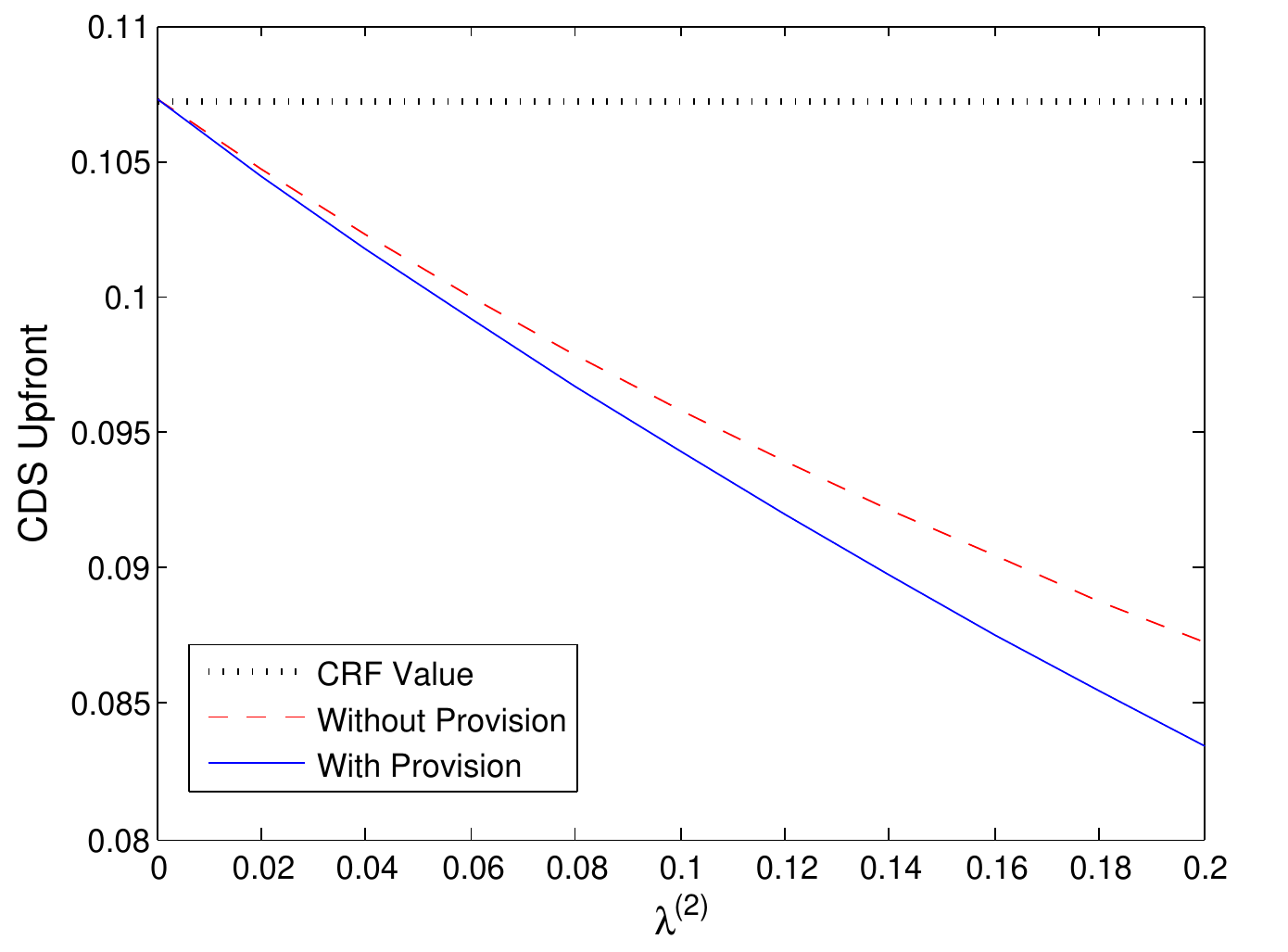}
	 \caption{\small{The CDS upfront prices under the CIR Model  are increasing in the counterparty recovery rate $R_2$ (left) and   decreasing in the counterparty default rate $\lambda^{(2)}$ (right). Other parameters are same as in Table \ref{table:covergence_CDS} along with $x = 8\%$.}}
    \label{BCVA_CDS_Convergence}
\end{figure}

In Figure \ref{BCVA_CDS_Convergence}, we compare the three MtM values in terms of the counterparty recovery rate $R_2 = 1 - L_2$ and the counterparty default rate $\lambda^{(2)}$. The CRF value is given by \eqref{eq:CDS} and is independent of $R_2$ and $\lambda^{(2)}$, so it is flat across the $x$-axes in  Figure \ref{BCVA_CDS_Convergence}. An increase in counterparty recovery rate or a decrease in counterparty default rate improves both MtM values, whereas the CRF value does not change. In both cases, the MtM value without provision dominates the MtM value with provision.

\subsection{Total Return Swaps (TRS)}\label{sect-trs}
Total return swaps (TRS) on defaultable bonds are OTC traded, and their MtM values are subject to counterparty risk.  A TRS is also referred to as  a bond forward \citep[see][Chap.\,2.5]{schonbucher2003credit}.  Fix a maturity of  $T$ years, we consider a TRS on a zero-recovery defaultable bond with  maturity  $T' \ge T$. The value of the defaultable bond, denoted by $C$,  is given in \eqref{eq:zero_coupon_bond_OU} for the OU model  and in \eqref{eq:zero_coupon_bond} for the CIR model. Given no default up to  the swap maturity $T$, the  swap buyer receives the difference between the bond price $C(T,X_{T}\,;\,T')$ and the pre-specified strike $K = C(0,X_0;T')$ from the swap seller. If the bond defaults before time $T$, the swap buyer pays the strike $K$ to the seller. Until  the first default time $\tau$  or expiration date $T$, the buyer continues to pay  at the risk-free rate plus a spread $r + p$.  For the MtM upfront value $P(t,x)$ with counterparty risk provision, we follow  \eqref{Expectation_With_Fixed_Income} with  the  triplet  \begin{align*}
 g(x) &= C(T,x\,;\,T') - K, \quad  h(t,x) = -K (r+ p), \quad
 l(t,x) = - K.
\end{align*} 

\begin{figure}[htp]
    \centering
    \includegraphics[width=0.45\textwidth]{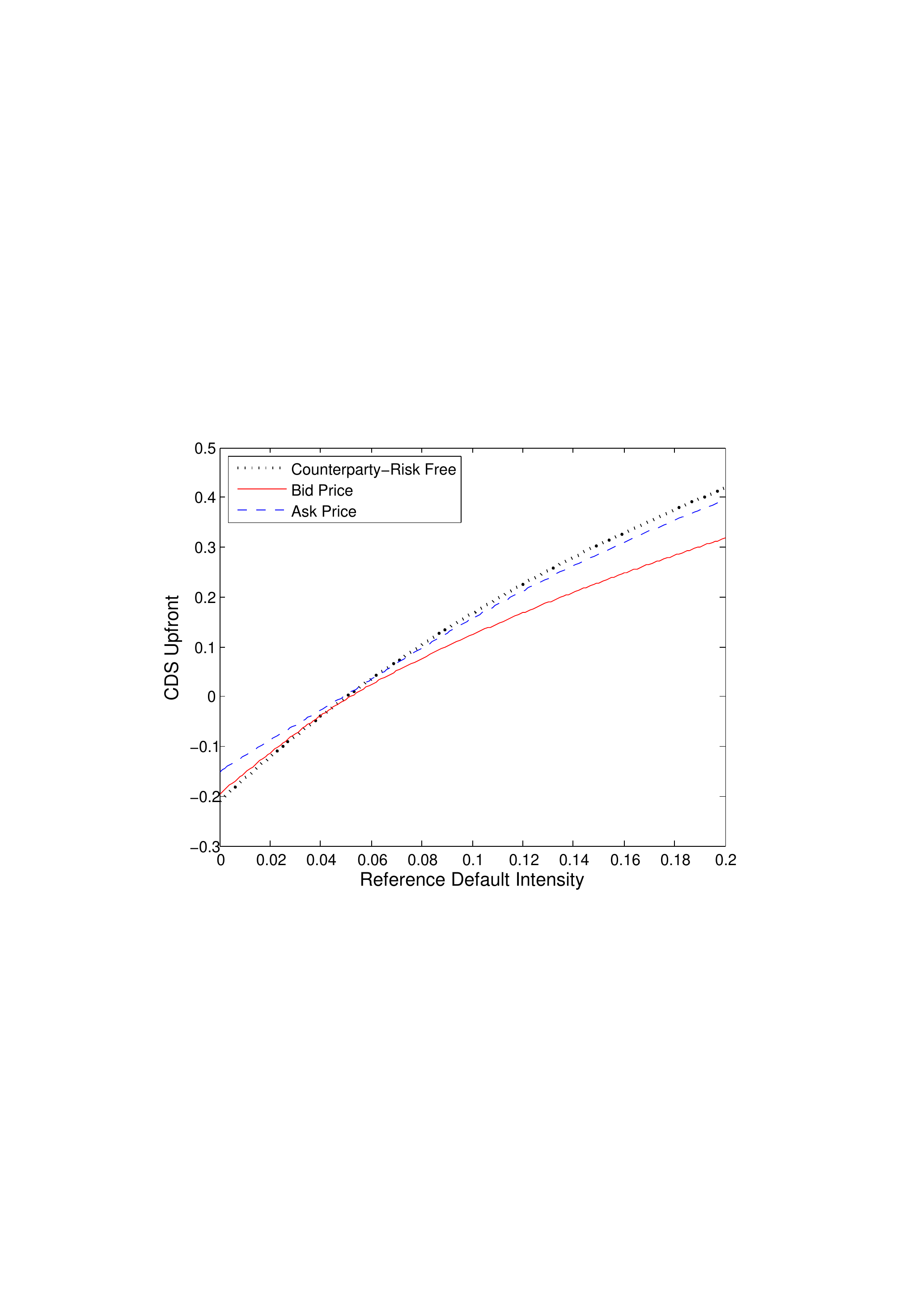}
    \includegraphics[width=0.45\textwidth]{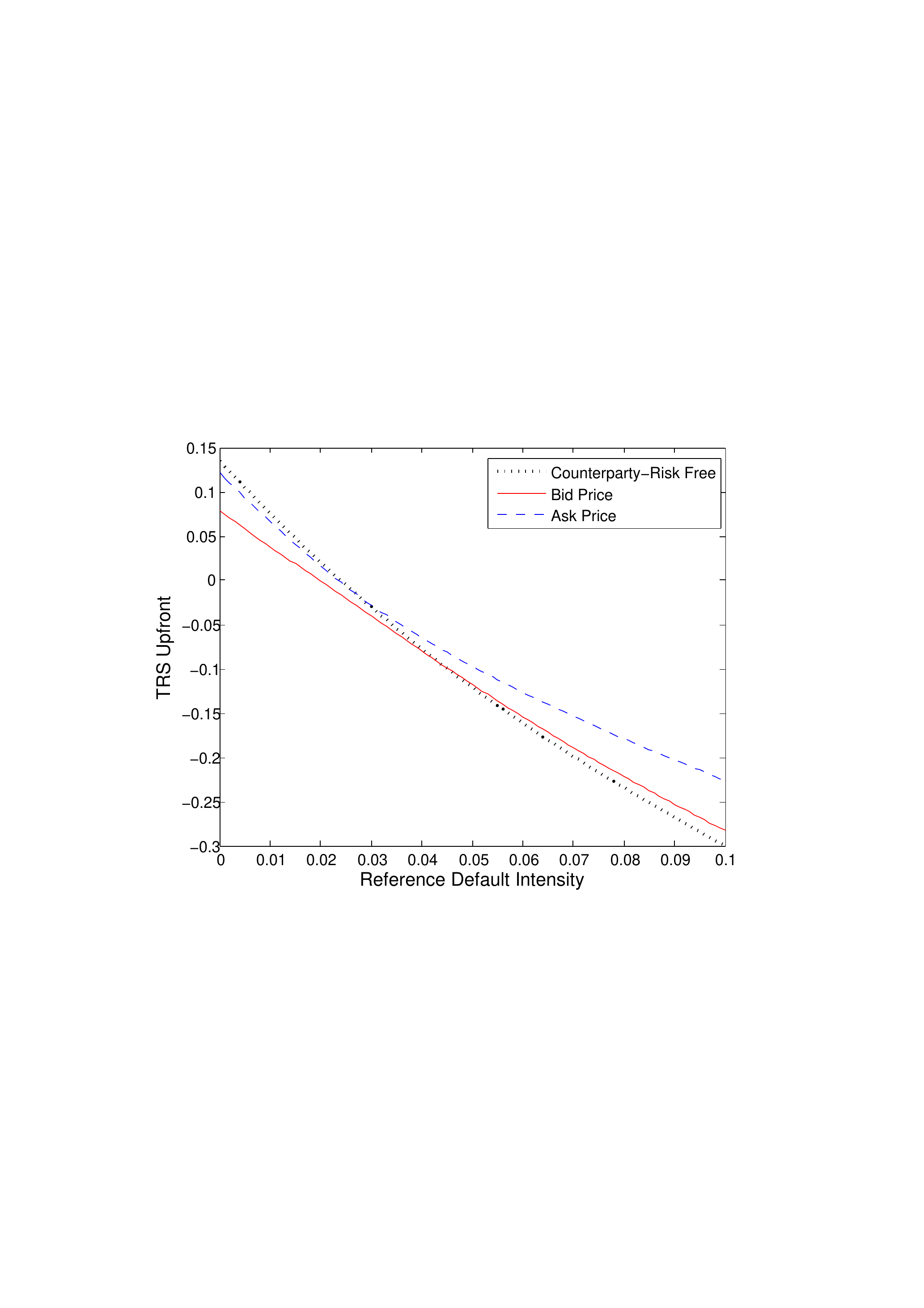}
	 \caption{\small{(Left) CDS bid-ask upfront prices under the CIR model. (Right) TRS bid-ask upfront prices under the CIR model. Other common parameters: $x = 8\%$, $T= 3$, $T' = 10$, $t = 0$, $p = 100bps$, $r = 2\%$, $\theta = 3\%$, $\kappa = 5\%$,  $\psi_{0} = 0$, $\psi_{1} = 5\%$, $\psi_{2} = 25\%$, $w_0 = 1$, $w_1 = w_2 = 0$, $\lambda^{(1)} = 5\%$, $R_1 = 40\%$, $\lambda^{(2)} = 25\%$, $R_2 = 40\%$, $\delta_1 = \delta_2 = 0$, $\bar{\epsilon} = 10^{-8}$,  $\overline{X} = 20\%$, $\underline{X} = 0$,  $\Delta X = 0.001$, $\Delta t = 1/500$.}}
    \label{BCVA_TRS_CIR_OU}
\end{figure}

In Figure \ref{BCVA_TRS_CIR_OU}, we obtain the  bid-ask upfront prices  with provision for a CDS and a TRS, along with the counterparty risk-free upfront values, under the CIR model. Since both CDS and TRS are swaps, the MtM values can be positive or  negative, depending on the reference default intensity.   As buying a CDS is similar to longing default risk, the CDS upfront value is increasing in the reference default intensity (left). On the other hand, since the TRS buyer is shorting default risk, the TRS upfront value decreases as the reference asset's default intensity $\lambda^{(0)}$ increases (right). 
\section{Conclusion} \label{Conclusions}
In summary, we have discussed a valuation framework to analytically study and numerically compute financial contracts subject to reference and counterparty
default risks with bilateral  collateralization. In addition to the underlying price dynamics used in this paper, our fixed point approach and the corresponding iterative numerical  algorithm can potentially be adapted to  price derivatives with counterparty risk in other models. The challenge lies in efficiently and accurately solving  a sequence of inhomogeneous PDE problems. Our model also sheds light on the role played by counterparty risk and collateralization in the formation of bid-ask spreads. 

Several interesting and practically important  problems remain for future investigation. For example, what are the price and risk impacts of  utilizing \emph{multiple} counterparties for OTC trading? A recent study by \cite{capponi13} derives the XVA of a CDS portfolio  with  a large number of entities. Apart from pricing,  counterparty risk should also be incorporated into  static or dynamic  strategies for derivatives trading (see e.g. \cite{jiaopham}), as well as mark-to-market timing \cite{leung2012risk}. In view of the financial crisis, counterparty risk has also become a key component in the design of clearing houses. Answers to these questions will be useful not only for individual or institutional investors, but also for regulators. 

\singlespacing
\appendix
\section{Appendix} \label{sec:Proofs}
\subsection{Proof of Theorem \ref{prop:PDE_classical_solution}} \label{sect-proofthm}
For $w \in C_b^{1,2}([0,T) \times \D,\R)$ and the operator $\M$ in \eqref{def:M}, we define $v \equiv v(t,s,x)$ by
\begin{align}
	v &:= \M w = \E_{t,s,x} \bigg[ e^{- \int_t^T \tilde{r}_u \, du} \, g(S_T,X_T) +  \int_t^T  e^{- \int_t^u \tilde{r}_v \, dv} f(u,S_u,X_u,w(u,S_u,X_u)) \, du  \bigg] \label{interim_proof_theorem}.
\end{align}
Equation \eqref{interim_proof_theorem} admits the same form as $(1.2)$ of \cite{heath2000martingales}. By Theorem $1$ of \cite{heath2000martingales}, $v$ is a classical solution $(C^{1,2}([0,T) \times \D, \R))$ of the PDE:
\begin{align}
 \frac{\partial v(t,s,x)}{\partial t} + \L v(t,s,x) - \tilde{r}(t,s,x) \, v(t,s,x)  + f(t,s,x,w(t,s,x)) = 0\,, \label{PDE_With_iteration_w_v}
\end{align}
for $(t,s,x) \in [0, T) \times D$ under certain conditions. To apply their result, we verify the sufficient conditions
$(A1)$,$(A2)$,$(A3')$ and
$(A3a')-(A3e')$ in their Theorem $1$. The conditions $(A1),(A2),(A3')$ and $(A3a')-(A3c')$ are identical to our conditions $(C1),(C2),(C4)$ and $(C5)-(C6)$. Since $w \in C^{1,2}_b([0,T) \times \D, \R)$, it is Lipschitz-continuous on $[0,T] \times \bar{D}_n$, $\forall n$. Combined with condition $(C7)$, it implies that the composition $(t,s,x) \rightarrow f(t,s,x,w(t,s,x))$ is uniformly H\"{o}lder-continuous on $[0,T] \times \bar{D}_n \times \R$, thus satisfying $(A3d')$. Lastly, the boundedness of $v$ from Lemma \ref{lemma_1} corresponds to $(A3e')$. Therefore, we conclude that $v$ is a bounded classical solution (i.e. $C^{1,2}_b([0,T) \times \D, \R)$) of the PDE \eqref{PDE_With_iteration_w_v} for $(t,s,x) \in [0,T) \times D$. 

Now, let's select the initial function $P^{(0)} \in C^{1,2}_b([0,T] \times \D, \R)$, e.g. $P^{(0)} = 0$. Then, the subsequent functions $P^{(n)} = \M P^{(n-1)}$, $n = 1,2,\ldots$, are also $C^{1,2}_b([0,T] \times \D, \R)$ and satisfy the linear inhomogeneous PDE \eqref{PDE_With_iteration}. By Proposition \ref{prop:contraction_mapping},  the contraction mapping $\M$ ensures the sequence $(P^{(n)})$ to converge to a unique fixed point $P \in C_b([0,T) \times \D, \R)$.

\subsection{Proof of Proposition \ref{Prop:UCVA_Positive_NPV}}
First, we denote $\tlambda := \lambda^{(1)} + \lambda^{(2)}$ and $\th(t,s) := h(t,s) + \lambda^{(0)} \, l(t,s)$. Applying the positive payoff to the definition of $P^b = P$ in \eqref{Expectation_With}, the MtM value with CR provision is given by   
\begin{align}
	P^b(t,s) &= \E_{t,s} \bigg[ e^{- (r + \lambda) \, (T-t)} \, g(S_T) +  \int_t^T  e^{-  (r + \lambda) \, (u-t)}  \th(u,S_u)\, du + \int_t^T (\tlambda - \alpha) \, e^{- (r + \lambda) \, (u-t)}  P^b(u,S_u) \, du \bigg]. \label{interim:Pb}
\end{align}

To prove  \eqref{with_option_buyer}, we substitute it into the RHS of \eqref{interim:Pb} and verify that it indeed reduces to  \eqref{with_option_buyer}.   To this end, we get 
\begin{align*}
	P^b(t,s) &= \E_{t,s} \bigg[ e^{- (r + \lambda) \, (T-t)} \, g(S_T) +  \int_t^T  e^{- (r + \lambda) \, (u-t)}  \th(u,S_u)\, du \notag\\  & \qquad \qquad + \int_t^T (\tlambda - \alpha) \, e^{-  (r + \lambda) \, (u-t)}   \left\{ e^{-  (r + \alpha + \lambda^{(0)}) (T-u)} \, g(S_T) +  \int_u^T  e^{- (r + \alpha + \lambda^{(0)}) \, (v-u)} \th(v,S_v) \, dv  \right\} \, du \, \bigg] \\
	&= \E_{t,s} \bigg[ e^{- (r + \alpha + \lambda^{(0)}) \, (T-t)} \, g(S_T)  \\
	& \qquad \qquad +  \int_t^T  e^{- (r + \lambda) \, (u-t)}  \th(u,S_u)\, du +  \int_t^T \int_t^v  (\tlambda - \alpha) \, e^{-  (\tlambda - \alpha) \, u - r \, (v-t) + (-\alpha - \lambda^{(0)}) \, v + \lambda \, t} \th(v,S_v) \, du \, dv \, \bigg]\\
	&= \E_{t,s} \bigg[ e^{- (r + \alpha + \lambda^{(0)}) \, (T-t)} \, g(S_T) \\
	& \qquad \qquad +  \int_t^T  e^{- (r + \lambda) \, (u-t)}  \th(u,S_u)\, du + \int_t^T  ( e^{- (r + \alpha + \lambda^{(0)}) \, (u-t)} - e^{- (r + \lambda) \, (u-t)} )  \th(u,S_u) \, du \bigg] \\
	&= \E_{t,s} \bigg[ e^{- (r + \alpha + \lambda^{(0)}) \, (T-t)} \, g(S_T) +  \int_t^T  e^{- (r + \alpha + \lambda^{(0)}) \, (u-t)}  \th(u,S_u)\, du \, \bigg].
\end{align*}
Since the last equality resembles \eqref{with_option_buyer}, we conclude. The same steps will yield the proof for expression \eqref{with_option_seller}.

To verify \eqref{without_option_buyer}, we use the expressions of $\Pi$ in \eqref{Expectation_Pi} and  $\hP^b = \hP$ in \eqref{Expectation_Without} to  get
\begin{align*}
 &\hP^b(t,s) = \E_{t,s} \bigg[ e^{- (r + \lambda) (T-t)} \, g(S_T)  +  \int_t^T  e^{- (r + \lambda) \, (u-t)}  \th(u,S_u) \, du   - \int_t^T \alpha  \,  e^{-  (r + \lambda) \, (u-t)} \,  \Pi(u,S_u) \, du \\
 & \qquad \qquad + \int_t^T \tlambda \, e^{-  (r + \lambda) \, (u-t)} \, \Pi(u,S_u) du \bigg]  \\
  &=\E_{t,s} \bigg[ e^{- (r + \lambda) (T-t)} \, g(S_T)  +  \int_t^T  e^{- (r + \lambda) \, (u-t)}  \th(u,S_u) \, du  - \int_t^T \alpha  \,  e^{-  (r + \lambda) \, (u-t)} \,  \Pi(u,S_u) \, du \\
  & \qquad \qquad +  \int_t^T \tlambda \, e^{-  (r+ \lambda) \, (u-t)}  \left[ \int_u^T e^{- (r + \lambda^{(0)}) (v-u)} \, \th(v,S_v) \, dv + e^{- (r + \lambda^{(0)}) (T-u)} \, g(S_T)   \right] \, du \, \bigg]  \\
  &=\E_{t,s} \bigg[ e^{- (r + \lambda^{(0)}) (T-t)} \, g(S_T) - \int_t^T \alpha  \,  e^{-  (r + \lambda) \, (u-t)} \,  \Pi(u,S_u) \, du\\
  & \qquad \qquad +  \int_t^T  e^{- (r + \lambda) \, (u-t)}  \th(u,S_u) \, du +  \int_t^T   (1 - e^{-  \tlambda \, (v-t)})   e^{- (r + \lambda^{(0)}) (v-t)} \, \th(v,S_v)
  \, dv \,  \bigg]  \\
    &=\E_{t,s} \bigg[ e^{- (r + \lambda^{(0)}) (T-t)} \, g(S_T) - \int_t^T \alpha  \,  e^{-  (r + \lambda) \, (u-t)} \,  \Pi(u,S_u) \, du + \int_t^T  e^{- (r + \lambda^{(0)}) \, (u-t)}  \th(u,S_u) \, du \bigg]  \\
 &= \Pi(t,s) - \E_{t,s}\left[ \int_t^T \alpha  \,  e^{-  (r + \lambda) \, (u-t)} \,  \Pi(u,S_u) \, du \right] .
\end{align*}
Applying the same steps to the definition of $\hP^s$ in \eqref{Expectation_Without_seller}, we obtain the equation \eqref{without_option_seller}.

\subsection{Proof of Proposition \ref{prop:bid_ask_dominance}}\label{sect-prof46}
From \eqref{without_option_buyer} and \eqref{without_option_seller} and the condition $\alpha, \beta \ge 0$, we obtain the inequalities $\hP^b(t,s) \le \Pi(t,s)$ and $\hP^s(t,s) \le \Pi(t,s)$.
The price expressions \eqref{with_option_buyer} and \eqref{with_option_seller} imply that
\begin{align}
	 P^b(t,s) &= \E_{t,s} \bigg[ e^{- \int_t^T (r + \alpha + \lambda^{(0)}) \, du} \, g(S_T) +  \int_t^T  e^{- \int_t^u (r + \alpha + \lambda^{(0)}) \, dv} \th(u,S_u) \, du \bigg] \notag\\
	  &\le \E_{t,s} \bigg[ e^{- \int_t^T (r + \lambda^{(0)}) \, du} \, g(S_T) +  \int_t^T  e^{- \int_t^u (r + \lambda^{(0)}) \, dv} \th(u,S_u) \, du \bigg] = \Pi(t,s). \label{Pbinequality}
\end{align}
Similar arguments give $P^s(t,s) \le \Pi(t,s) $. Hence, we conclude \eqref{bid_ask_relationship}.

Next, applying the definition of $\hP^b \equiv \hP$ in \eqref{Expectation_Without} along with the inequality \eqref{Pbinequality}, we get
\begin{align*}
 \hP^b(t,s) &= \E_{t,s} \bigg[ e^{- (T-t) \, (r + \lambda)} \, g(S_T) +  \int_t^T  e^{- (u-t) \, (r + \lambda) } \th(u,S_u) \, du + \int_t^T (\tlambda - \alpha) \, e^{- (u-t) \, (r + \lambda) } \Pi(u,S_u)\, du \bigg] \\
 & \ge \E_{t,s} \bigg[ e^{- (T-t) \, (r + \lambda)} \, g(S_T) +  \int_t^T  e^{- (u-t) \, (r + \lambda) } \th(u,S_u) \, du + \int_t^T (\tlambda - \alpha) \, e^{- (u-t) \, (r + \lambda) } P^b(u,S_u) \, du \bigg]\\
 &= P^b(t,s).
\end{align*}
The last equality follows from the definition of $P^b$ in \eqref{Expectation_With}. From the definition of $\hP^s$ in \eqref{Expectation_Without_seller} and the inequality $0 \le P^s \le \Pi$ in \eqref{bid_ask_relationship}, we obtain
\begin{align*}
 \hP^s(t,s) &= \E_{t,s} \bigg[ e^{- (T-t) \, (r + \lambda)} \, g(S_T) +  \int_t^T  e^{- (u-t) \, (r + \lambda) } \th(u,S_u) \, du + \int_t^T (\tlambda - \beta) \, e^{- (u-t) \, (r + \lambda) } \Pi(u,S_u) \, du \bigg] \\
 & \ge \E_{t,s} \bigg[ e^{- (T-t) \, (r + \lambda)} \, g(S_T) +  \int_t^T  e^{- (u-t) \, (r + \lambda) } \th(u,S_u) \, du + \int_t^T (\tlambda - \beta) \, e^{- (u-t) \, (r + \lambda) } P^s(u,S_u) \, du \bigg]\\
 &= P^s(t,s).
\end{align*}
The last equality holds from the definition of  $P^s$ in \eqref{Expectation_With_seller}. Hence, we conclude \eqref{bid_ask_relationship2}. 

\subsection{CRF CDS and TRS Prices with  OU and CIR Reference Default Intensities}
Here we summarize the  computation of the CRF values of CDS and TRS.  Let us assume that the risk free rate is a time-deterministic function $r(t)$, and the default intensities $\lambda^{(i)}$, $i \in \{0,1,2\}$, are of the form $\lambda^{(i)}(t,X_t)  = \psi_i(t) + w_{i} \, X_t$, where $X$ follows the OU or CIR model.  This   specification is more general than the one used in the numerical examples in Sections \ref{sect-cds}-\ref{sect-trs}, and it  yields  analytic expressions for the pre-default prices (without counterparty risk provision) of the credit default swaps and total return swaps considered therein. These  prices are used for comparison analysis (see Figures \ref{BCVA_CDS_Convergence}  and \ref{BCVA_TRS_CIR_OU}).  

 In the OU model, the pre-default zero-coupon bond price with maturity $T$ and zero recovery meets the formula \cite[Chap.~7.1.1]{schonbucher2003credit}
\begin{align}
 C_1(t,x\,;\,T) &= e^{- \int_t^T r(u) \, du} \, e^{- \int_t^T \psi_0(u) du} \, \E_{t,x}\left[ e^{- \int_t^T  w_{0} \, X_s \, ds} \right] \nonumber \\
& = e^{- \int_t^T (r(u) + \psi_0(u)) du} \, e^{A_1(T-t) - B_1(T-t) \, w_0 \, x}, \label{eq:zero_coupon_bond_OU}
\end{align}
where
\begin{align*}
 A_1(u) = \int_0^u \big( \frac{1}{2} \sigma^2 B_1(v)^2 - \kappa \, \theta \, B_1(v) \big) dv\,, \quad B_1(u) = \frac{1- e^{\kappa \, u}}{\kappa}, \quad 0 \le u \le T. 
\end{align*}
In the CIR model, the bond price is given by \cite[Chap.~7.2]{schonbucher2003credit}
\begin{align}
 C_2(t,x\,;\,T) &= e^{-\int_t^T r(u) \, du} \, e^{- \int_t^T \psi_0(u) du} \E_{t,x}\left[ e^{- \int_t^T  w_{0} \, X_u \, du} \right] \nonumber \\
& = e^{- \int_t^T (r(u) + \psi_0(u)) du} \, A_2(T-t) e^{- B_2(T-t) \, x}, \label{eq:zero_coupon_bond}
\end{align}
where 
\begin{align}
&A_2(u) = \left[   \frac{2 \Xi \, e^{\frac{u}{2} \, (\Xi + \kappa)} }{( \Xi + \kappa) \, (e^{\Xi \cdot u} - 1) + 2 \Xi} \right]^{ \frac{2 \kappa \, \theta}{\sigma^2}} \,,\quad B_2(u) = \left[ \frac{2 \, (e^{\Xi \cdot u} - 1) \, w_{0}}{( \Xi + \kappa) \, (e^{\Xi \cdot u} - 1) + 2 \Xi}\right],\label{CDS_A_B}
\end{align}
for $0 \le u \le T$ with constant $\Xi = \sqrt{\kappa^2 + 2 \, \sigma^2 \, w_{0}}$\,.

In turn, we can  summarize the prices for the CDS discussed  in Section \ref{sect-cds}. Under the OU model, the buyer pays the pre-default \textit{upfront price} for  the CDS: 
\begin{align}
 \Pi(t,x) &= \int_t^T C_1(t,x\,;\,u) \left( w_0 \, x e^{- \kappa \,(u - t)} + (\kappa \, \theta - \frac{\sigma^2}{\kappa}) (u-t) + \frac{\sigma^2}{\kappa} (1 - e^{-\kappa (u-t)} ) - p \right) \, du\,,
\end{align}
 where $C_1(t,x\,;\,u)$  is the pre-default zero coupon bond price with maturity $T$ and zero recovery given in  \eqref{eq:zero_coupon_bond_OU}.  Under the CIR model, the pre-default upfront price of a  CDS  with maturity  $T$ and  premium rate $p$ is given by 
\begin{align}
\Pi(t,x) = \int_t^T C_2(t,x;u) \left[ w_{0} (\kappa \, \theta \, B_2(u-t) + B_2^{'}(u-t) x) - p \right] du\,, \label{eq:CDS}
\end{align}
with  $C_2(t,x\,;\,u)$  in \eqref{eq:zero_coupon_bond} and $B_2(\cdot)$  in \eqref{CDS_A_B}. See Chap. 7 of \cite{schonbucher2003credit}.

As for the total return swap described in Section \ref{sect-trs}, its  CRF upfront value at time $t \le T$,  is given by
\begin{align}
 \Pi(t,x) &= \E_{t,x}\bigg[ e^{-\int_t^T (r(u) + \lambda^{(0)}(u,X_u)) \, du} ( C(T,X_{T}\,;\,T') - K ) \notag \\
 &~~ - \int_t^T \lambda^{(0)}_u \, e^{-\int_t^u  (r(v) + \lambda^{(0)}(v,X_v)) \, dv} \,  K \, du - \int_t^T e^{-\int_t^u (r(v) + \lambda^{(0)}(v,X_v)) \, dv} \, K \, (r(u) + p) \, du \bigg]\notag\\
 &= C(t,x\,;\,T') - K \left( 1 + p \int_t^T C(t,x;u) du \right) .\label{eq:TRS}
\end{align}
For the  CRF prices  in Figures    \ref{BCVA_CDS_Convergence}  and \ref{BCVA_TRS_CIR_OU}), we set $\psi_0 = 0$, and  assume constant interest rate $r$. The counterparty  default rates $\lambda^{(1)}$ and $\lambda^{(2)}$ are also constant, and  the reference  default rate is set to be $\lambda^{(0)}(t,X_t) = X_t$.  This  amounts to setting  $w_1 = w_2 = 0$, and $w_0 = 1$  in \eqref{eq:zero_coupon_bond_OU}-\eqref{eq:TRS} to obtain the CRF values.
\begin{small}
\bibliographystyle{apa}
\bibliography{reference}
\end{small}
\end{document}